\let\proof\@undefined
\let\endproof\@undefined
\def\bmath#1{\boldsymbol{#1}}
\def\Gn{\mathcal{G}_n}                
\def\th{\bmath{\theta}}               
\def\fG{\bmath{f}\left(G\right)}      
\def\x{\bmath{x}}                     
\def\xstar{\x^\star}                  
\def\thxstar{\hat{\th}\left(\xstar\right)}  
\def\Gstar{G^\star}                   
\newcommand{\argmax}{\operatornamewithlimits{argmax}}   
\def\Sp{\mathbb{S}^{p-1}}             
\def\feat{\mathcal{P}}                
\def\CH{\mathcal{C}}                  
\def\V{\mathcal{V}}                   
\def\nb{\mathcal{B}}                  
\def\emb{\bmath{e}}                   
\def\SB{\mathcal{S}\left(\nb\right)}  
\def\inner#1#2{\left<#1,#2\right>}    
\def\H{\mathcal{H}}                   
\def\ELSRGM{\texttt{ELSRGM}}
\newtheorem{theorem}{Theorem}
\begin{document}

\buecedefinitions%
        {Generating Similar Graphs From Spherical Features}
        {Degeneracy and Exponential Locally Spherical Random Graph Model}
        {Dalton Lunga and Sergey Kirshner}
        {{\tt \{dlunga,skirshne\}@purdue.edu}}
        {May 12 2011}
        {11-01} 


\buecereporttitlepage


\pagenumbering{roman}
\setcounter{page}{1}
\buecereportsummary{We propose a novel model for generating graphs similar to a given example graph.  Unlike standard approaches that compute features of graphs in Euclidean space, our approach obtains features on a surface of a hypersphere.  We then utilize a von Mises-Fisher distribution, an exponential family distribution on the surface of a hypersphere, to define a model over possible feature values.  While our approach bears similarity to a popular exponential random graph model (ERGM), unlike ERGMs, it does not suffer from degeneracy, a situation when a  significant probability mass is placed on unrealistic graphs.  We propose a parameter estimation approach for our model, and a procedure for drawing samples from the distribution.  We evaluate the performance of our approach both on the small domain of all $8$-node graphs as well as larger real-world social networks.}

\tableofcontents\bueceemptypage
\listoffigures\bueceemptypage
\listoftables\bueceemptypage

\buecereportheaders

\pagenumbering{arabic}
\setcounter{page}{1}

\section{Introduction}
Increasingly, many domains produce data sets containing relationships that are conveniently represented by networks, e.g., systems sciences (the Internet), bioinformatics (protein interactions), social domains (social networks).  As researchers in these areas are developing models and tools to analyze the properties of networks, they are hampered by few samples available to evaluate their approaches.  This gives rise to a problem of generating more network samples that can be viewed as drawn from the same population as the given network.

While there are a number of possible approaches to this problem, perhaps the most well-studied model is the exponential random graph models (ERGMs, or in social network literature, $p^\star$), an exponential family class of models, matching the statistics over the set of possible networks to the statistics of the network in question \citep[e.g.,][]{WassermanPattison}.  This and similar approaches have a long history as they generalize the $p_{1}$ \cite{HollandLeinhardt1981} and Markov random graph \cite{FrankStrauss} models first developed in the social network literature. While such approaches are intuitive and have nice properties, they also suffer from the issue of {\em degeneracy} \citep{Handcock2003b,Rinaldo2009}, which is manifested in the instability of parameter estimation, and in placing most probability mass of resulting distributions on unrealistic graphs (e.g., empty graph or complete graphs). As a result, these approaches are not suitable for the purpose of generating graphs similar to the given one.

This paper contains two contributions.  First, we zero-in on the issue of degeneracy and discover that its cause is related to the geometry of the set of feature vectors and the number of graphs mapped to each feature vector (thought of as feature vector weights).  By augmenting feature vectors with logarithms of their corresponding weights, we show that only graphs with such augmented feature vectors on the relative boundary of the resulting extended convex hull can become modes of any exponential random graph model, explaining why unrealistic graphs (which are on the relative boundary) often get large probability masses. Second, using the insight of the observation above, we propose a novel random graph model which is based on embedding the features of graphs onto a surface of a hypersphere.  Since a spherical surface is a relative boundary of the sphere's convex hull, all of the feature vectors would then belong to the relative boundary of the convex hull and could potentially serve as modes of the corresponding distributions.  This in turn helps to avoid the degeneracy issues which plague ERGMs.

Our proposed approach makes use of {\em spherical} features obtained by embedding possible graphs onto a surface of a sphere \citep{Wilson2010}, and then approximating the distribution of the resulting spherical feature space with a von Mises-Fisher distribution.  Since the space of all possible graphs is too large to consider for embedding, we consider determining the embedding function based only on the neighborhood around the given graph thus resulting in a locally spherical embedding of the set of graphs. The main benefit of our approach is that it fixes the issue of degeneracy, with the mode of the distribution over the spherical feature vector coinciding with the features of the given graph.  An additional advantage of our approach is that its parameter estimation procedure does not require cumbersome maximum entropy approaches used with ERGMs.

We start by revisiting the ERGM model and presenting insights on why this model often fails to generate realistic graphs (Section \ref{sec:ergm}).  We then propose our alternative approach, exponential locally spherical random graph model (\ELSRGM, Section \ref{sec:elsrgm}) and evaluate it on both synthetic and realistic graphs while comparing it to ERGMs (Section \ref{sec:experiments}).  We conclude the paper (Section \ref{sec:conclusion}) with ideas for future work.

\section{ERGMs}\label{sec:ergm}
Overall, we are interested in probabilistic approaches for generating graphs similar to the given one.  We consider the case of simple (unweighted, no self-loops) undirected graphs $G=\left(V,E\right)\in \Gn$.\footnote{{\scriptsize Our approach extends to directed graphs as well.}}  Where $n=\left|V\right|$ is the number of vertices; $E\in\left\{0,1\right\}^{n\times n}$ is a symmetric binary adjacency matrix with zeros on its diagonal, with $e_{ij}=1$ iff there is an edge from $v_i$ to $v_j$, and $e_{ij}=0$ otherwise, where $v_{i}\in V$ and $e_{ij}\in E$.  There are  $\left|\Gn\right|=2^{\binom{n}{2}}$ possible labeled graphs with $n$ vertices, a finite but often prohibitively large number even for fairly small $n$.

We first consider the well-studied exponential random graph model (ERGM, also known as $p^\star$) as a starting point for our approach.
\subsection{ERGM Definition}
In the area of social network analysis, scientists are often interested in specific features of networks, and some of the state-of-the-art models explicitly use them to define functions of network sub-structures.  We will denote the vector of these functions by $\bmath{f}:\Gn\to\mathbb{R}^d$.  Among the examples of such features used by social scientists, we have the number of edges, the number of triangles, the number of $k$-stars, etc.:
\begin{eqnarray}
f_{edge}\left(G\right) &=& \sum\sum_{1\leq<i<j\leq n}e_{ij};\nonumber\\
f_{\triangle}\left(G\right) &=& \operatornamewithlimits{\sum\sum\sum}_{1\leq i<j<k\leq n}e_{ij}e_{ik}e_{jk};\nonumber\\
f_{k\star}\left(G\right) &=& \sum_{i=1}^n\operatornamewithlimits{\sum\dots\sum}_{\tiny \begin{array}{c}1\leq i_1<\dots\leq i_k\\i_1\neq i,\dots,i_k\neq i\end{array}}\prod_{j=1}^ke_{ii_j}.
\label{eqn:features}
\end{eqnarray}

The subgraph patterns corresponding to features in equation (\ref{eqn:features}) are shown in Figure \ref{fig:countstatistics}.

\begin{figure}[!htbp]
\centering
\includegraphics[width=1.17in]{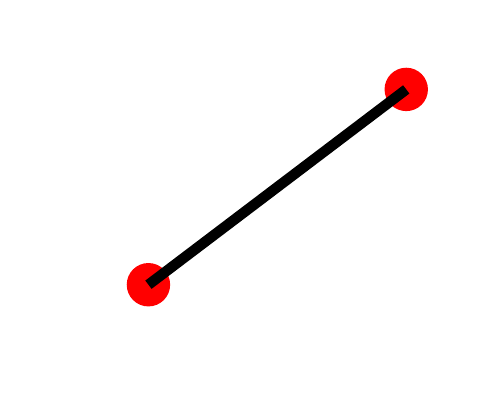}\hspace*{0.5cm}
\includegraphics[width=1.45in]{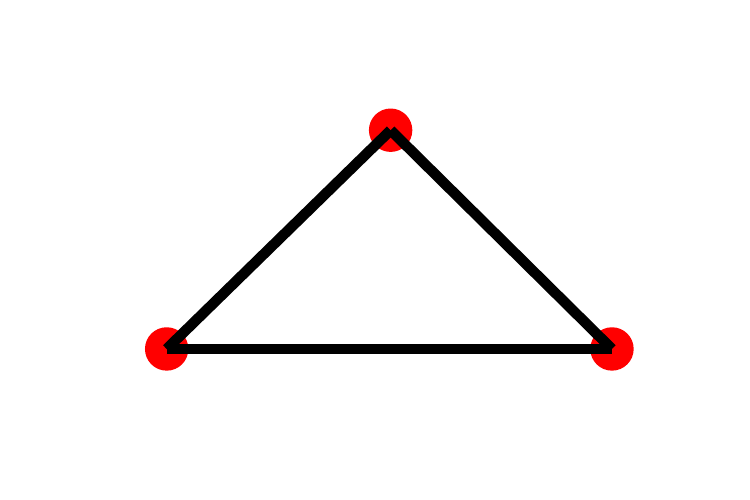}\hspace*{0.5cm}
\includegraphics[width=1.45in]{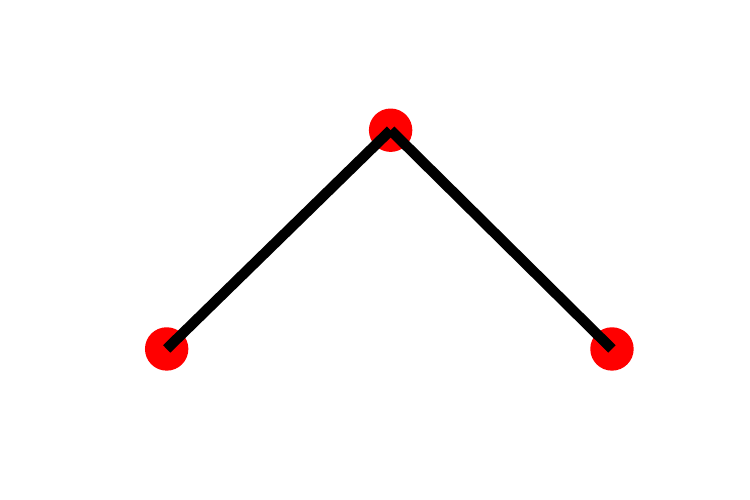}\hspace*{0.5cm}
\includegraphics[width=1.35in]{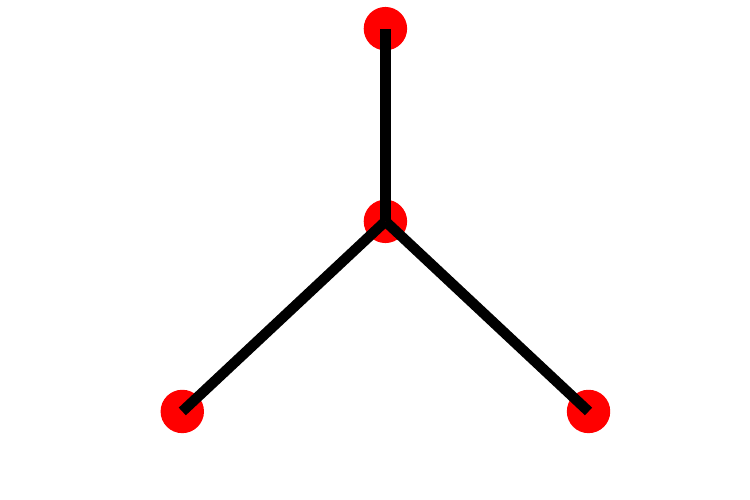}
\caption{Simple typical subgraph configurations for undirected graphs. From \textit{left} to \textit{right}: {\tt edge}, {\tt triangle}, {\tt two-star} and {\tt three-star} configurations.}
\label{fig:countstatistics}
\end{figure}

A commonly used probabilistic model over $\Gn$ is an exponential family model that uses the expectation of $\bmath{f}$ as a vector of sufficient statistics.  The distribution over $\Gn$ can be parameterized in the form
\begin{equation}
P\left(G\vert \th\right)=\frac{\exp\inner{\th}{\fG}}{Z\left(\th\right)},\ G\in\Gn,
\label{eqn:ergm}
\end{equation}
where $\th\in\mathbb{R}^d$ is the vector of natural model parameters, $\fG$ is a vector of features of $G$, and \[Z\left(\th\right)=\sum_{G\in\Gn}\exp\inner{\th}{\fG}\]
is the partition function.

Let $\feat=\left\{\fG: G\in\Gn\right\}$ denote the set of possible feature-vector values.  Under ERGMs, the space $\Gn$ of possible graphs is coarsened into a much smaller set $\feat$ of possible features.  The distribution in \eqref{eqn:ergm} can also be viewed as a distribution over $\feat$ with all of the graphs mapped to the same value of $\bmath{f}$ assigned the same probability mass.  Let $w\left(\x\right)=\left|\left\{G\in\Gn:\fG=\x\right\}\right|$ be the number
of graphs corresponding to a feature value $\x\in\feat$.  (We will refer to $w\left(\x\right)$'s as {\em weights}.)  The distribution $P\left(G\vert\th\right)$ can also be extended to the set $\feat$,
\begin{equation}
\begin{split}
P\left(\x\vert \th\right) &= \frac{1}{Z\left(\th\right)}w\left(\x\right)\exp\inner{\th}{\x},\ \x\in\feat,\\
Z\left(\th\right) &= \sum_{\x\in\feat}w\left(\x\right)\exp\inner{\th}{\x},
\end{split}\label{eqn:ergm weighted}
\end{equation}
also an exponential family distribution.  We let $\CH\subseteq\mathbb{R}^d$ denote a convex hull of $\feat$.  Since $\Gn$ is finite, so would be $\feat$, and $\CH$ would be a polytope in $\mathbb{R}^d$.  Using a well-known result from the theory of exponential families \cite{Barndorff-Nielsen1978}, if $\xstar\in\mathbb{R}^d$ is a vector of the sufficient statistics (e.g., for a single example graph $\Gstar$, $\xstar=\bmath{f}\left(\Gstar\right)$), a maximum likelihood estimate (MLE) $\thxstar$ satisfies
\begin{eqnarray}
    E_{P\left(G\vert\thxstar\right)}\fG = E_{P\left(\x\vert\thxstar\right)}\x = \xstar,
    \label{eqn:maxent}
\end{eqnarray}
and it exists, and the corresponding distribution $P\left(\cdot\vert \thxstar\right)$ is unique as long as $\xstar\in \mbox{rint}\left(\CH\right)$, the relative interior of the convex hull of the set of possible feature vectors $\feat$.  In essence, ERGMs are designed to preserve mean features of the observed graph, a very intuitive and often desirable property.  However, it is also well-known that the estimation of $\thxstar$ is an extremely cumbersome task, complicated by the fact that the exact calculation of $Z\left(\th\right)$ is intractable, and approximate approaches (e.g., pseudo-likelihood, MCMC) are employed instead \cite{Hunter2008a}.

\subsection{Degeneracy in ERGMs}

ERGMs often suffer from the problem of {\em degeneracy} \cite{Handcock2003b}.  There are two types of degeneracy usually considered.
The first type occurs when the MLE estimate $\thxstar$ either does not exist,
or the MLE estimation procedure does not converge due to numerical instabilities \cite{Handcock2003b,Rinaldo2009}.

The second type of degeneracy happens when $\thxstar$ can be reliably estimated, but the resulting probability distribution places significant probability mass (or virtually {\em all} of probability mass) on unrealistic graphs, e.g., empty or complete graphs. This type of degeneracy can be considered from another viewpoint; that is the mode of ERGM corresponding to the $\thxstar$ may be placed on $\x\in\feat$ very different from $\xstar$.  This is an undesirable property as there is little justification for placing large probability mass over a region away from observed feature vectors while placing little mass over the observed example.

For an illustration, consider the set $\mathcal{G}_8$ (the set of all possible simple undirected graphs with $8$ nodes) with feature vectors $\fG=\left(f_{edge}\left(G\right),f_{\triangle}\left(G\right)\right)$.  In this case, Figure \ref{fig:g8distributionanddiameter} displays (left plot) the support space $\feat$ consists of $228$-({\em edge,triangle}) pair-statistics for all possible $8$-node undirected graphs. The right plot of Figure \ref{fig:g8distributionanddiameter}, displays a diameter distribution highlighting that graphs with different topologies map to same {\em edge-triangular} feature pairs. The diameter was computed by observing that each feature-pair could be viewed as a cell of graphs with same feature count. This can be accomplished by first computing a perturbation graph whose nodes are all non-isomorphic graphs of $8$-nodes. A graph edit distance is applied on the set $\mathcal{G}_8$ containing $12346$-non isomorphic graphs (computed using {\tt nauty} \cite{McKay1981}). To compute the diameter for each cell given the perturbation graph, one only has to identify the graphs mapping to that cell and extract the maximum number of edges for any pair of graphs with their feature counts mapping to that cell. Table \ref{tab:ComplexityOfERGModelsOnCanonicalSmallGraphSizesOfNNodes} shows the complexity of small sized graph spaces.

\begin{table*}[!htbp]
       \caption{Complexity of graph spaces for fixed number of nodes.}
          \centering
        \begin{tabular}{lccc}
		\hline
		n& number of edges & all graphs are $2^{n\choose 2}$& non-isomorphic\\
			\hline
			7 & 21 & 2,097,152 &1,044\\
			\textcolor[rgb]{0.00,1.00,0.00}{8} & \textcolor[rgb]{0.00,1.00,0.00}{28} & \textcolor[rgb]{0.00,1.00,0.00}{268,435,456} &\textcolor[rgb]{0.00,1.00,0.00}{12,346}\\
			9 & 36 &  68,719,476,736&274,668\\
			10 & 45 & 35,184,372,088,832 &12,005,168\\
			11 & 55 & 3,602,879,701,896,397 &1,018,997,864\\
			12 & 66 & 7,378,697,629,483,821,000 &165,091,172,592\\
			\hline
		\end{tabular}
	\label{tab:ComplexityOfERGModelsOnCanonicalSmallGraphSizesOfNNodes}
\end{table*}

\begin{figure*}[!htbp]
    \centering
    \begin{tabular}{cc}
      \hspace{-0.7cm}\includegraphics[scale=0.58]{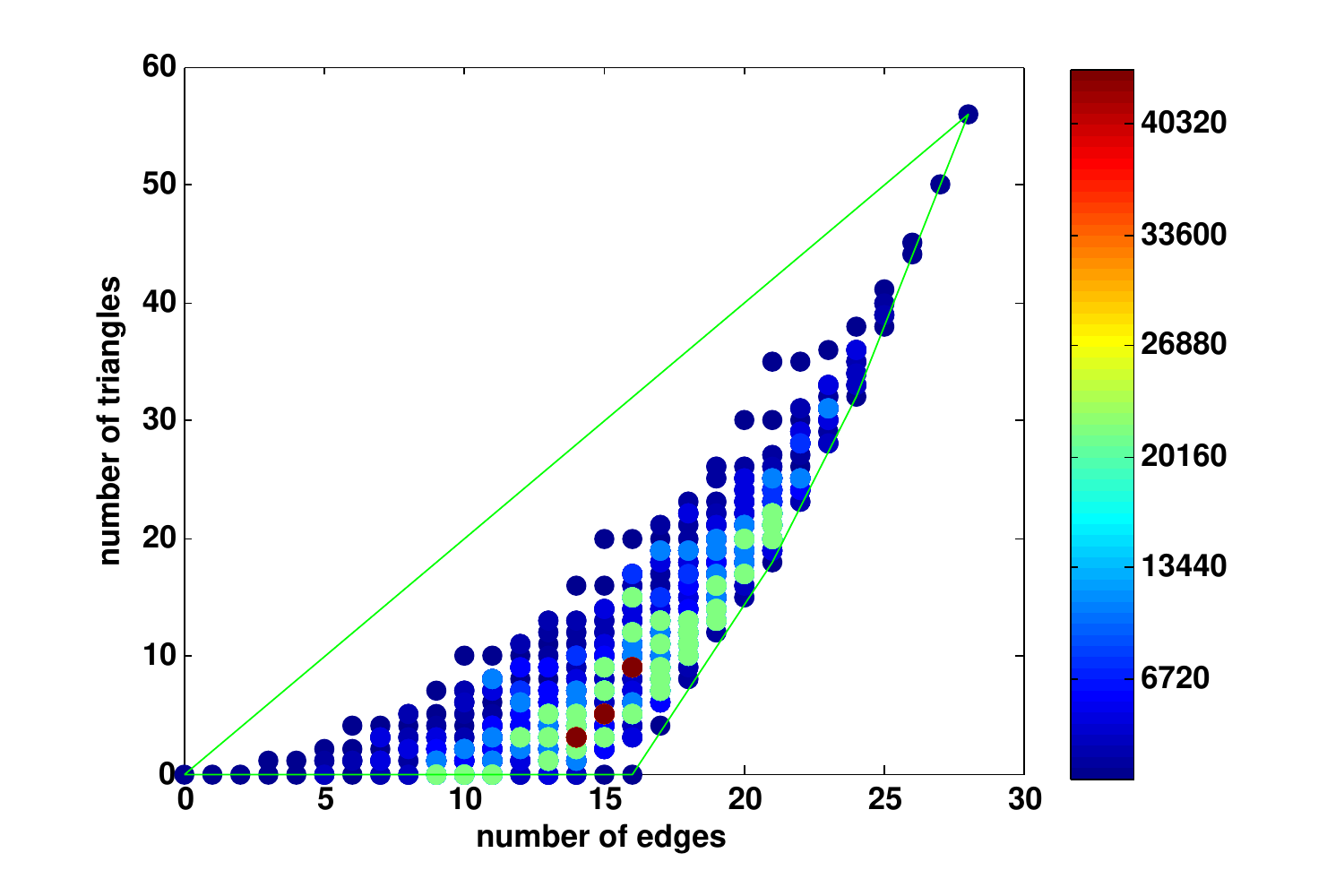}&\hspace{-1.2cm}\includegraphics[scale=0.58]{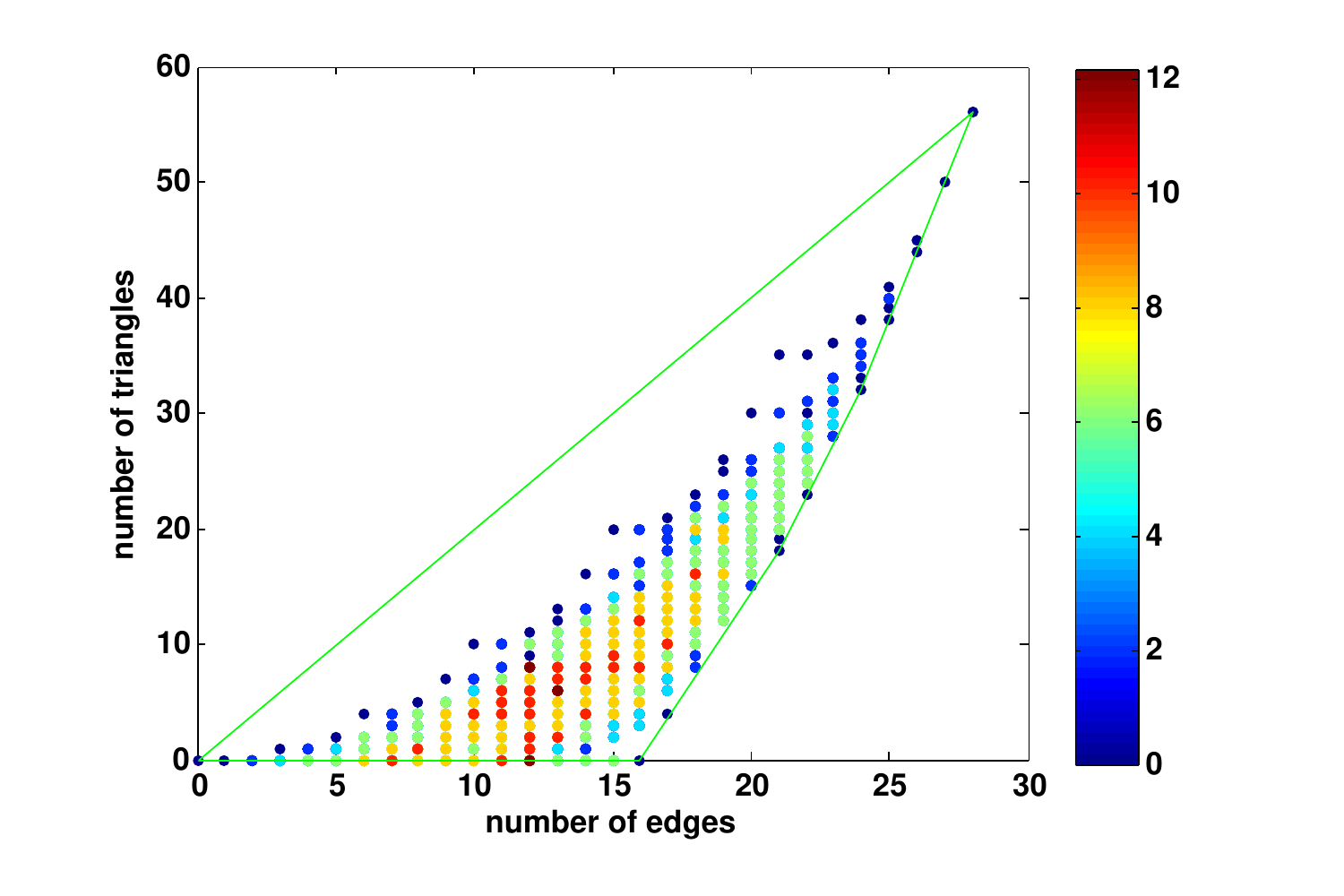}
    \end{tabular}
    \caption{{\em left:} Distribution of $8$-node graphs. {\em right:} Distribution of cell diameter; where we consider each feature-pair as a cell and compute the graph edit distance between each pair of graphs with feature counts mapping to that cell (feature-pair).}
    \label{fig:g8distributionanddiameter}
\end{figure*}

 Figure \ref{fig:fpmodes} shows the support $\feat$ for the distribution over feature-vectors (all circles), and its convex hull $\CH$ (boundary in green).  Only a small subset of the example feature pairs $\xstar=\bmath{f}\left(\Gstar\right)$ result in distributions over $\feat$ with modes coinciding the example feature vector $\xstar$ (red discs with black borders).   ERGMs estimated from other feature vectors will thus generate graphs with different features from the example graph as shown in Figure \ref{fig:degeneracy}.

\begin{figure*}[ht]
\begin{center}
\includegraphics[width=1.5in]{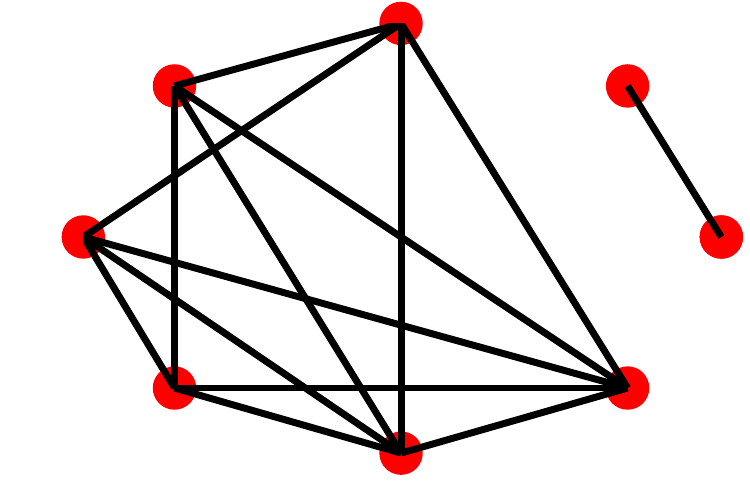}\\
\vspace{0.1cm}
\includegraphics[scale=0.9]{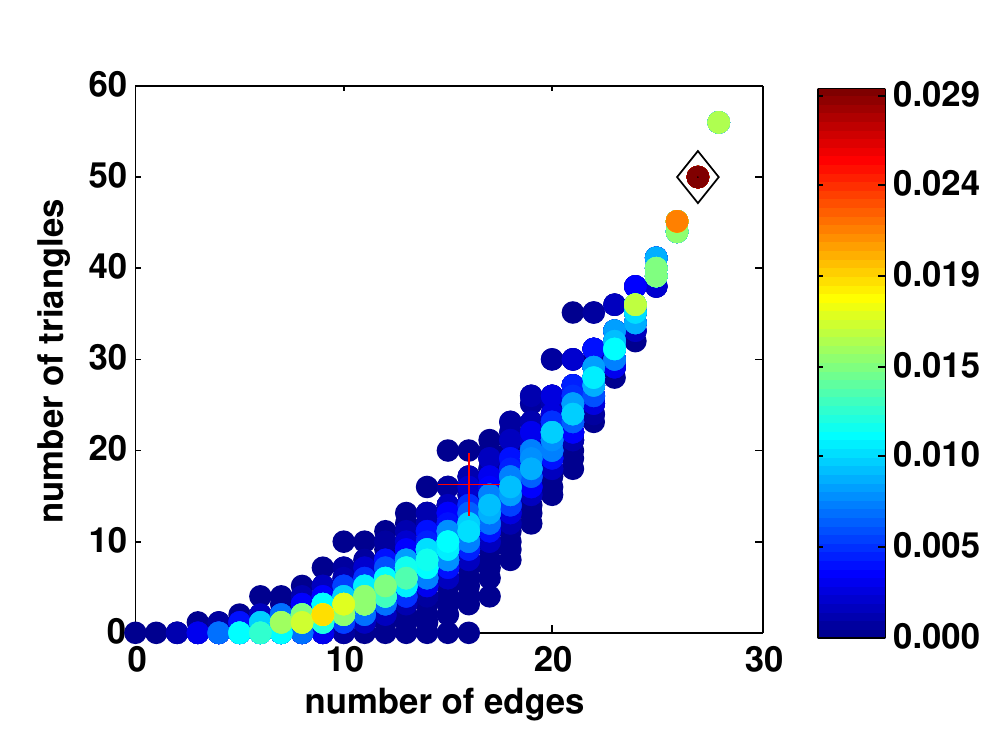}
\caption{A degenerate ERGM specified by edge-triangle pair for an $8$-node graph({\em top figure}). Colorcoded is the pmf over the edge-triangle space for an estimated MLE $\theta_{MLE}=(-0.992,0.617)$. "+" indicates the observed feature and its mean, while the "$\diamond$" shape indicates the ERGM mode.}
\label{fig:degeneracy}
\end{center}
\end{figure*}

It is this behavior that we are most concerned with and are trying to address in this paper, and from now on when we mention degeneracy, unless otherwise noted, we mean the second type of degeneracy.

\begin{figure*}[ht]
\begin{center}
\includegraphics[scale=0.7]{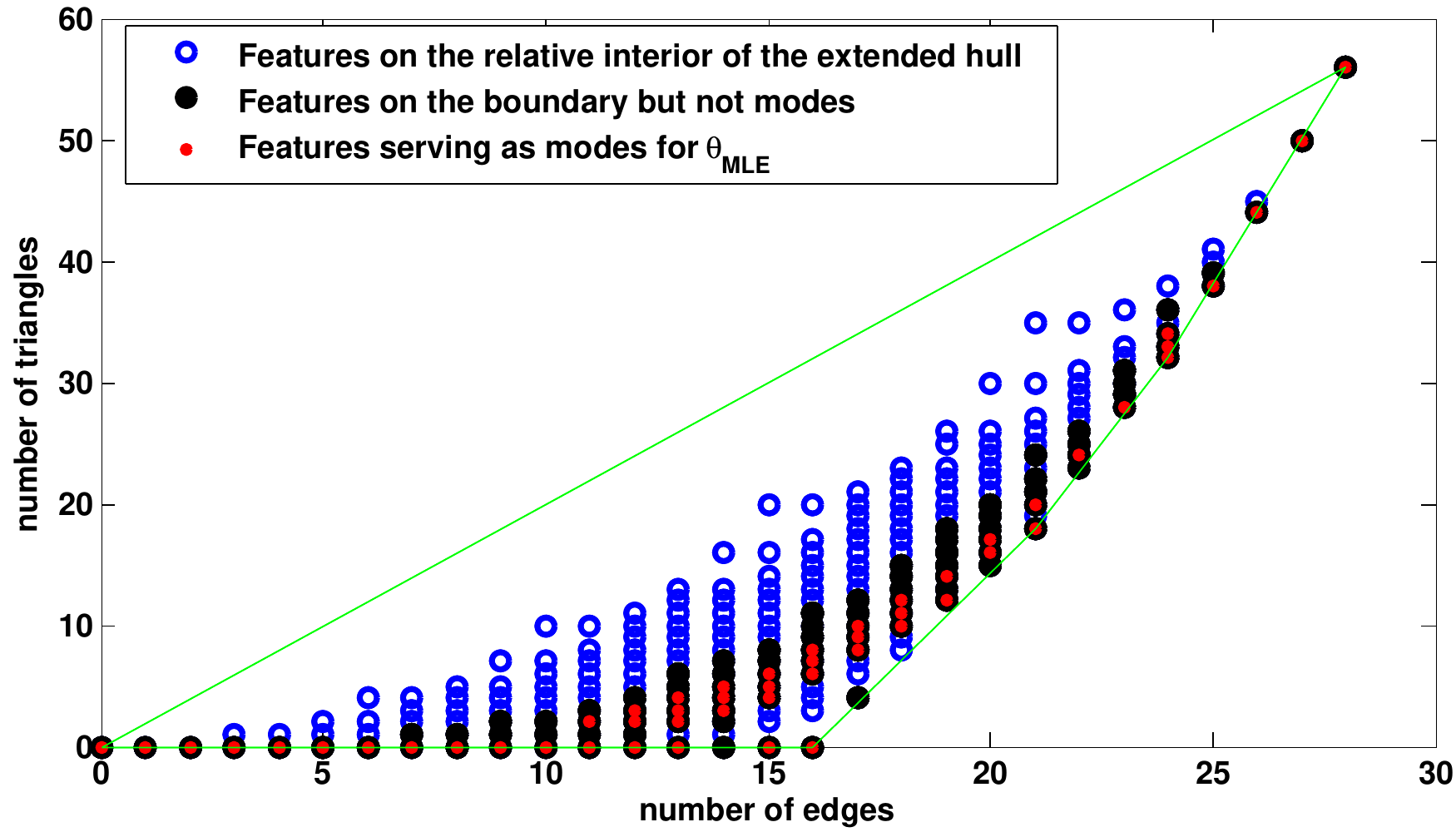}
\caption{Solid points show mode placement on the edge-triangle feature pairs that form the vertices of the extended hull as a result of Theorem \ref{thm:max inner product}. Overlayed on solid points are modes obtained from using the estimated $\hat{\theta}_{mle}$ of equation (\ref{eqn:ergm weighted}).}
\label{fig:fpmodes}
\end{center}
\end{figure*}

\begin{figure*}[ht]
\begin{center}
\includegraphics[scale=0.7]{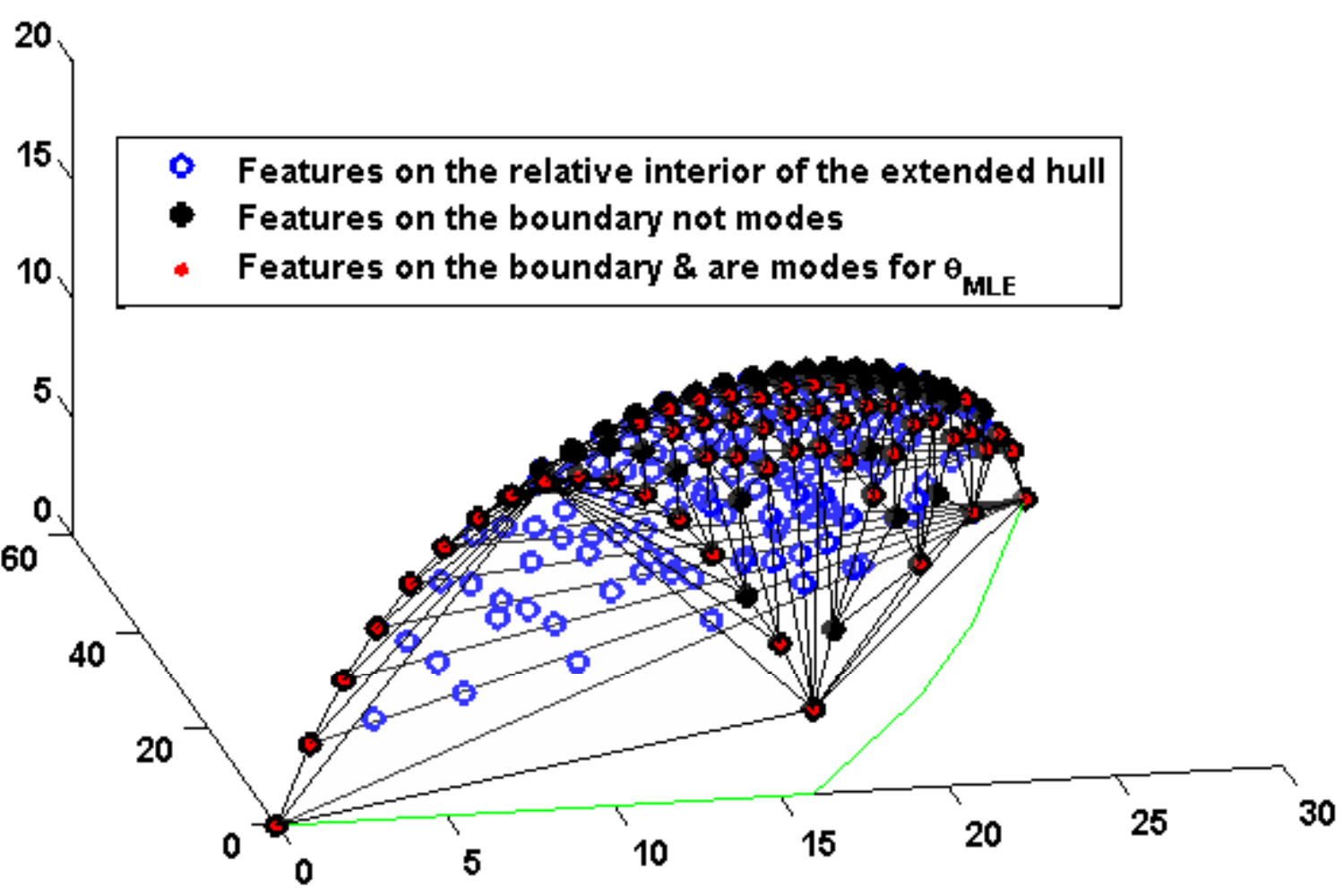}
\caption{View of mode placement on facets of the extended 3D convex hull. Each mode forms a vertex of a given triangle (denoted by black lines) for a facet of the convex hull.}
\label{fig:fpmodes3D}
\end{center}
\end{figure*}

\subsubsection{Approaches to Fixing Degeneracy in ERGMs}
Recent approaches in literature have focused on a more flexible specification of the model in equation \eqref{eqn:ergm weighted}. This has been achieved by a mixed set of feature statistics which includes the node attributes of a given graph \cite{Handcock2008}. Instead of using only the structural properties ({\em edges, triangles etc.}) of the graph, other attributes (e.g., {\em gender,race,age,etc}) are included as covariates for the model. This approach, however, does not address degeneracy in general as one has to know what set of features and attributes to choose in attempting to minimize degeneracy \cite{Handcock2008}. Such a task demands domain knowledge to accurately specify a reasonable model.  Another approach makes use of the geometrically weighted edgewise shared partner, the geometrically weighted dyadic shared partner, and the geometrically weighted degree network statistics as new statistics for the model in (\ref{eqn:ergm}). These specifications when parameterized have been shown to lead to curved exponential families \cite{Hunter2008a}.  However, the difficulty with this approach is not only in the parameter estimation of the resulting curved exponential model, but also on having to avoid other graph features that may be dependent to these specifications as that will result in degeneracy \cite{Hunter2006}.

\subsection{Why Are Unrealistic Graphs Likely under ERGMs?}
It may not be entirely surprising that the probability mass over support $\feat$ does not concentrate around the expected value of $P$ as the mode and the mean of the exponential family distributions do not always match.  This does not however imply that the mode is placed on features corresponding to unrealistic graphs.  To investigate the degeneracy further, we first introduce the following result from convex analysis.
\begin{theorem}\label{thm:max inner product}
Suppose $\CH\subset \mathbb{R}^d$ is a full-dimensional bounded convex polytope with a finite set of vertices $\V=\left\{\x^1,\dots,\x^L\right\}$.  Then for any $\th\in\mathbb{R}^d\setminus\left\{\bmath{0}\right\}$,
\begin{equation}
\mathcal{M}_{\th}=\argmax_{\x\in\CH}\inner{\th}{\x}\subset\mbox{rbd}\left(\CH\right),\nonumber
\end{equation}
where by ${rbd}\left(\CH\right)$ we denote the relative boundary of the convex hull $\CH$.
\end{theorem}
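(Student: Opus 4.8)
The plan is to argue by contradiction, exploiting the elementary fact that a nonzero linear functional can always be increased by moving a small step in the gradient direction. First I would record the simplifying reduction: since $\CH$ is full-dimensional in $\mathbb{R}^d$, its affine hull is all of $\mathbb{R}^d$, so the relative interior coincides with the topological interior and $\mbox{rbd}\left(\CH\right)$ coincides with the ordinary topological boundary. I would also observe that the maximum is in fact attained, since $\CH$ is a bounded polytope, hence compact, and $\x\mapsto\inner{\th}{\x}$ is continuous, so $\mathcal{M}_{\th}$ is nonempty and closed; only containment, not existence, is needed for the claim, but it is worth noting that the statement is non-vacuous.

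The core step is a one-line perturbation argument. Suppose, for contradiction, that some maximizer $\x^{0}\in\mathcal{M}_{\th}$ lies in $\mbox{int}\left(\CH\right)$. Then there is $\epsilon>0$ with the open ball $B\left(\x^{0},\epsilon\right)\subseteq\CH$. Consider the perturbed point
\[
\y=\x^{0}+\frac{\epsilon}{2\|\th\|}\,\th,
\]
which satisfies $\|\y-\x^{0}\|=\epsilon/2<\epsilon$, so $\y\in\CH$. Evaluating the objective gives
\[
\inner{\th}{\y}=\inner{\th}{\x^{0}}+\frac{\epsilon}{2\|\th\|}\|\th\|^{2}=\inner{\th}{\x^{0}}+\frac{\epsilon\|\th\|}{2}>\inner{\th}{\x^{0}},
\]
where the strict inequality uses the hypothesis $\th\neq\bmath{0}$, i.e. $\|\th\|>0$. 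This contradicts $\x^{0}$ being a maximizer of $\inner{\th}{\cdot}$ over $\CH$. Hence no maximizer lies in the interior, and therefore $\mathcal{M}_{\th}\subseteq\mbox{rbd}\left(\CH\right)$.

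I do not anticipate a genuine obstacle here, as the result is a standard fact about linear optimization over polytopes. The only point requiring care is the reduction in the first paragraph: without full-dimensionality the conclusion would concern the relative boundary of a lower-dimensional set, and the perturbation would have to be carried out inside the affine hull of $\CH$ using the orthogonal projection of $\th$ onto the direction space of that affine hull. That projection could vanish, in which case $\inner{\th}{\cdot}$ is constant on $\CH$ and the claim would fail; the full-dimensionality assumption is precisely what lets me perturb directly along $\th$ and keep $\|\th\|>0$ as the relevant quantity. This is where I would focus the write-up to make the role of the hypotheses explicit.
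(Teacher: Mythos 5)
Your proof is correct, but it takes a different route from the paper's. You argue by contradiction with a direct perturbation: if a maximizer $\x^0$ were interior, stepping a small distance along $\th$ stays in $\CH$ and strictly increases $\inner{\th}{\cdot}$, which is impossible. The paper instead writes each $\x\in\CH$ as a convex combination of the vertices $\V$, deduces that $\max_{\x\in\CH}\inner{\th}{\x}$ equals the maximum over $\V$, and then shows that the level set $\H\left(a_{\th}\right)=\left\{\x:\inner{\th}{\x}=a_{\th}\right\}$ is a \emph{proper} supporting hyperplane (using full-dimensionality to rule out $\CH\subset\H\left(a_{\th}\right)$), so that $\mbox{rint}\left(\CH\right)\cap\H\left(a_{\th}\right)=\emptyset$ by a standard result on supporting hyperplanes, and $\mathcal{M}_{\th}\subset\H\left(a_{\th}\right)$ finishes the argument. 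Your version is more elementary and self-contained: it needs no vertex representation and no cited theorem, only compactness and the observation that $\mbox{rint}=\mbox{int}$ for a full-dimensional set. The paper's version buys something it uses later: by routing the argument through the vertices it implicitly establishes that the maximum is attained on $\V_{\th}\subseteq\V$, which is what justifies restricting the candidate modes of the weighted ERGM to the vertex set $\bar{\V}$ of the extended hull in the subsequent discussion; your proof, as written, only localizes the maximizers to the boundary, not to vertices or to a single supporting face. Your closing remark about the role of full-dimensionality (that otherwise one must project $\th$ onto the direction space of the affine hull, and that projection may vanish) is exactly the right caveat and mirrors the point where the paper invokes $\CH\not\subset\H\left(a_{\th}\right)$.
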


\begin{proof}
Since $\CH$ is a convex hull of $\V$, for all $\x\in\CH$
\begin{equation}
\x=\sum_{l=1}^L\lambda_l\x^l,\ \mbox{for some }\lambda_1,\dots,\lambda_L\geq0,\ \sum_{l=1}^L\lambda_l=1.\label{eqn:convex combination}
\end{equation}
For any $\th\in\mathbb{R}^d\setminus\left\{\bmath{0}\right\}$, let $\V_{\th}=\argmax_{\x\in\V}\inner{\th}{\x}$, and let $a_{\th}=\max_{\x\in\V}\inner{\th}{\x}$.
Then
$\forall\ \x\in\CH$, $\forall \x_{\theta}\in \V_{\th}$
\begin{align}
\inner{\th}{\x} &= \inner{\th}{\sum_{l=1}^L\lambda_l\x^l} = \sum_{l=1}^L\lambda_l\inner{\th}{\x^l} \leq \sum_{l=1}^L\lambda_l\inner{\th}{\x_{\th}}\nonumber\\
&= \inner{\th}{\x_{\th}}=a_{\th}.\nonumber
\end{align}
Then $\left\{x\in\mathbb{R}^d:\ \inner{\th}{\x}\leq a_{\th}\right\}$ is a supporting hyperspace, and $\H\left(a_{\th}\right)=\left\{x\in\mathbb{R}^d:\ \inner{\th}{\x}= a_{\th}\right\}$ is a supporting $d-1$ dimensional hyperplane for $\CH$.  Since $\CH$ is a full-dimensional convex polytope, $\CH\not\subset\H\left(a_{\th}\right)$, and therefore, $\H\left(a_{\th}\right)$ is a proper supporting hyperplane.  Thus $\mbox{rint}\left(\CH\right)\cap \H\left(a_{\th}\right)=\emptyset$ (e.g., \cite{Brondsted83}, Thm. 4.1).  Observing that $\mathcal{M}_{\th}\subset \H\left(a_{\th}\right)$ completes the proof.
\end{proof}
We now consider the formulation in \eqref{eqn:ergm weighted}, from which after taking the log-likelihood we observe that
\begin{align}
\argmax_{\x\in\feat}\ln P\left(\x\vert\th\right) &= \argmax_{\x\in\feat}\left\{\inner{\th}{\x}+\ln w\left(\x\right)\right\}.\label{eqn:max ergm weighted}
\end{align}

We extend $\feat$ to include the weight $w$.  Let\\$\bar{\feat}=\left\{\left(\x,\ln w\left(\x\right)\right):\x\in\feat\right\}$ be the extended set of features, with $\bar{\CH}$ being the resulting extended convex hull for $\bar{\feat}$ and $\bar{\V}$ the set of vertices for $\bar{\CH}$.  Then
\begin{align}
\inner{\th}{\x}+\ln w\left(\x\right) &= \inner{\left(\th,1\right)}{\left(\x,\ln w\left(\x\right)\right)}.\nonumber
\end{align}
By Theorem \ref{thm:max inner product}, only the points on the boundary $\mbox{rbd}\left(\bar{\CH}\right)$ can maximize \eqref{eqn:max ergm weighted}.  Thus to find the set of possible modes of \eqref{eqn:ergm weighted}, we can restrict our attention only to\\ $\mathcal{M}=\left\{\x:\left(\x,\ln w\left(\x\right)\right)\in\bar{\V}\right\}$.

Consider an illustration in Figure \ref{fig:fpmodes3D} for the case of all $8$-node graphs.  Only points on the boundary of the polytope $\bar{\CH}$ can {\em potentially} be modes of any ERGM specified on the feature set in Figure \ref{fig:fpmodes} (points on the boundary are denoted by solid black discs and solid red discs with black borders on both plots).  Figure \ref{fig:fpmodes} further reveals that only a small number of points on the relative boundary of $\bar{\CH}$ correspond to {\em actual observed} modes for equation \ref{eqn:ergm weighted} (ERGMs with parameters $\thxstar$ from (\ref{eqn:maxent})).\footnote{{\scriptsize The solid red discs in Figure \ref{fig:fpmodes} handles the case where modes of the maximum likelihood distribution are not unique.}}  There are two possible reasons for this occurrence.  One, some of the points in $\bar{\feat}$ correspond to $\bar{\th}=\left(\th,t\right)$ with $t\neq 1$ (where $t\in \mathbb{R}$, is a scalar augmenting the parameter vector), so not all of $\mathcal{M}$ may be the modes of \eqref{eqn:ergm weighted}.  Two, the MLE solution $\thxstar$ may lie outside of the cone \cite{Rinaldo2009} $$\left\{\th\in\mathbb{R}^d:\ \xstar\in\argmax_{\x\in\CH}\inner{\th}{\x}\right\}$$ of parameters for which $\xstar$ is the mode.\footnote{{\scriptsize Often, feature vectors corresponding to the empty and to the complete graphs have particularly large cones, with many $\thxstar$ values falling within these cones, where $\xstar$ is the feature maximizing (\ref{eqn:max ergm weighted}).  This explains why many ERGMs place large probability masses on these degenerate graphs.}}

The above observations also explain why considering a curved exponential model \cite{Hunter2006}
\begin{equation}
\begin{split}
P\left(\x\vert \th\right) &= \frac{1}{Z\left(\th\right)}w\left(\x\right)\exp\inner{g\left(\th\right)}{\x},\ \x\in\feat
\end{split}\nonumber
\end{equation}
does not rectify the issue of degeneracy as curving the space of parameters does not change the geometry of the feature space.\footnote{{\scriptsize We observed the same $ERGM$ behavior in separate experiments on other types of features spaces {\em e.g. Edge-vs-2Star, Edge-vs-GWED, Triangle-vs-GWESP} (GWED: geometrically weighted degree distribution, GWESP: geometrically weighted edgewise shared partners \cite{Hunter2008a}.)}}

Finally, we note that this type of degeneracy is not restricted to distributions over statistics of finite graphs.  Similar issues can in general arise with exponential family distributions on a bounded support owing to the fact that the exponential family distributions are designed to match the mean statistics and not to concentrate the probability mass around the mode.  Consider an illustration in Figure \ref{fig:degeneracyongrid} for a $20 \times 20$ grid of uniformly spaced points; fitted is the exponential family model $p\left(\bmath{x}\vert\th\right)\propto \exp\left<\th,\x\right>$. As evident from the plots, the estimated model exhibit the degeneracy issue described above.

\begin{figure*}[ht]
\centering
\begin{tabular}{cc}
\hspace{-0.8cm}\includegraphics[scale=0.8]{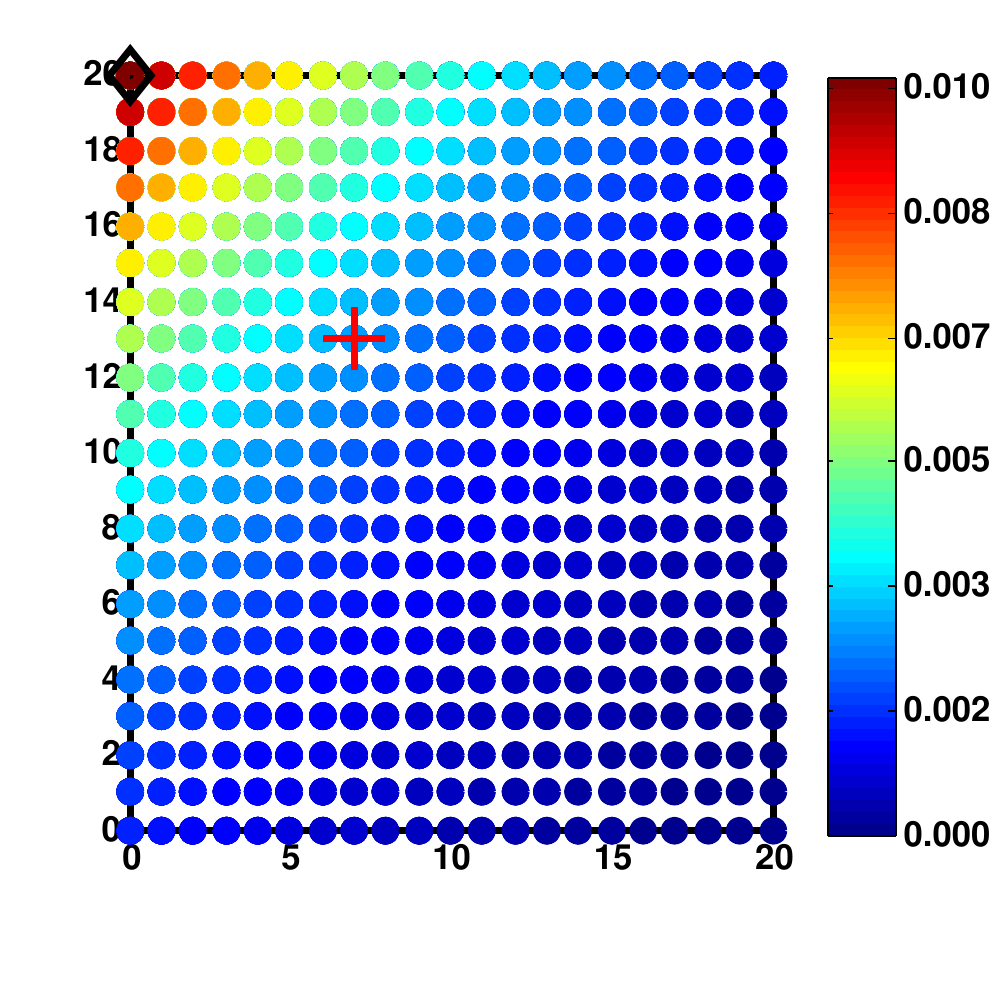} &\hspace{-0.6cm} \includegraphics[scale=0.8]{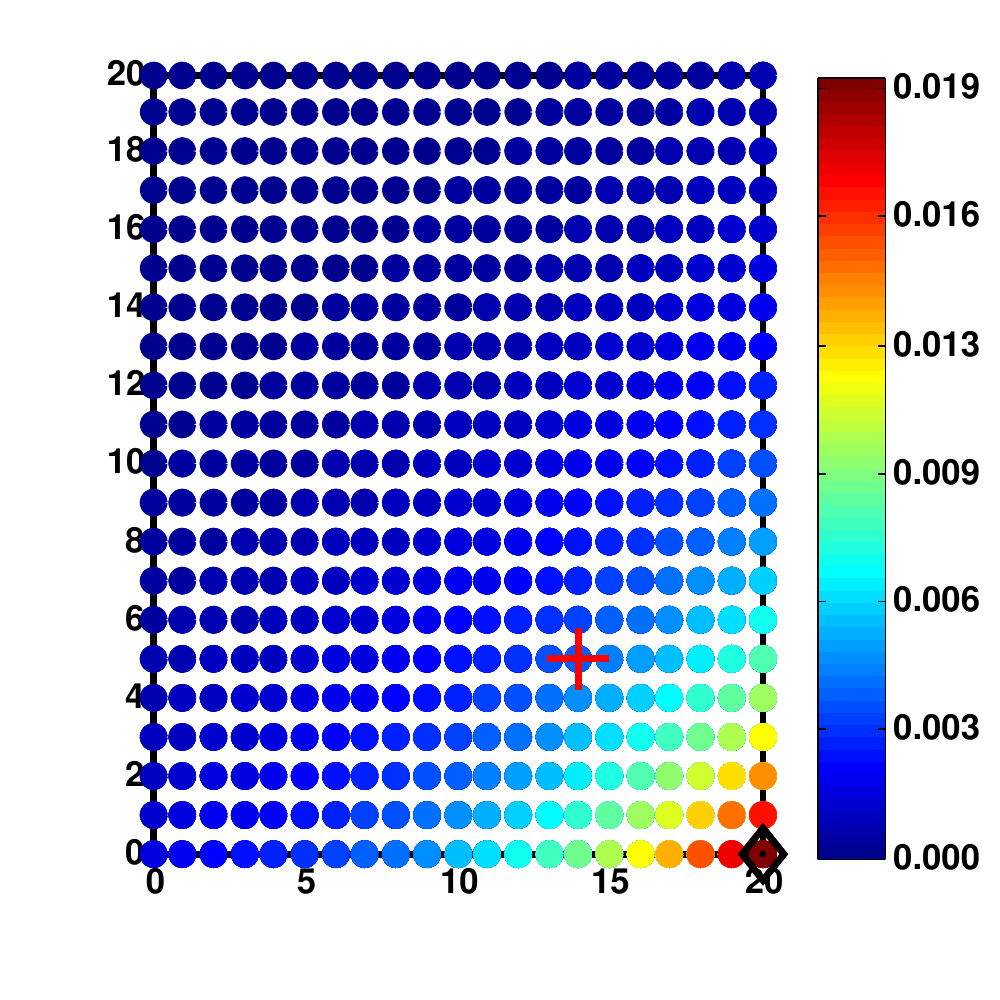}
\end{tabular}
\caption{An illustration of degeneracy of exponential models on non-graph data. Colorcoded is the pmf over 2D grid spaces for estimated MLEs {\em (left plot)} $\th_{MLE}=(-0.086,0.086)$ $\&$ {\em (right plot)} $\th_{MLE}=(0.120,-0.160)$.  "+" sign indicates the observed feature and its mean, while the "$\diamond$" shape indicates ERGM mode.}
\label{fig:degeneracyongrid}
\end{figure*}
\section{Exponential Locally Spherical Random Graph Model}\label{sec:elsrgm}
The main result of section \ref{sec:ergm} provides us with a very important insight, that is; type II degeneracy in $ERGMs$ is due to the bounded nature of discrete exponential models. The formulation of such models is sensitive to the geometry of the support space $\feat$. The geometry of the convex hull on $\feat$ defines which points are likely to have {\em most} or {\em all} of the probability mass placed on them. If $\x\not\in\mbox{rbd}(C)$ then a model computed for $\x$ will place very little probability mass on $\x$, while most of the mass is placed on some point {\em (i.e. mode)} ${\x}^{\star}\in\mbox{rbd}(C)$. This result suggests that mapping all points $\x\in\feat$ onto a surface that belongs to its relative boundary and defining a model in the new space would solve the degeneracy issue. In the following sections, we propose mapping all observed features onto a spherical surface $\mathbb{S}^{p-1}$ since every point on it
belongs to $\mbox{rbd}(C(\mathbb{S}^{p-1}))$. We then define a distribution and sampling techniques for graphs over the resulting feature space.

\subsection{Algorithm}

\begin{algorithm}[t]
\caption{Outline of {\tt ELSRGM} Procedure}
\label{fig:lsrgm}
\begin{algorithmic}[1]
\REQUIRE{Given an example graph $\Gstar\in\Gn$ and a feature vector function $\bmath{f}:\Gn\to\mathbb{R}^d$:}
\STATE Generate a neighborhood $\nb\subseteq\Gn$ by performing a random walk in $\Gn$ starting at $\Gstar$.
\STATE Compute the set of statistics $\feat\left(\nb\right)$.
\STATE Compute a mapping $\emb:\mathbb{R}^d\to\Sp$ to embed $\feat\left(\nb\right)$ onto $\Sp$, a surface of a sphere in $\mathbb{R}^p$.
\STATE Estimate the parameters for the von Mises-Fisher density $f$ over the space of hyper-spherical features $\Sp$.
\STATE Approximate $f$ with a density $\hat{f}$, a mixture of kernels centered around the spherical features corresponding to graphs in $\nb$ and recompute $\nb$ when new sample graphs are discovered.
\end{algorithmic}
\end{algorithm}

In this section we describe the new model for graph sampling, one that can be used to generate non-degenerate graphs similar to the given one.  Since the approach we are proposing is based on an exponential family model over the locally spherical embeddings of graph features, we will call it an Exponential Locally Spherical Random Graph Model, or \ELSRGM\ for short.  Our approach is executed in several steps as outlined in Algorithm \ref{fig:lsrgm}.

First, we sample the neighborhood $\nb\subseteq \Gn$ around the given example graph $\Gstar$, and then compute the set $\feat\left(\nb\right)=\left\{\fG:G\in\nb\right\}$ of feature vectors corresponding to the graphs in the neighborhood.  If the space of graphs ($\Gn$ or its subset) is small, then it may be possible to consider all graphs in the set.  Otherwise, the neighborhood of $\Gstar$ is sampled by a random walk which at each step considers graph one edge deletion/insertion away from the current graph.

The resulting feature set is then embedded in a $p-1$ dimensional unit hypersphere $\Sp\subset\mathbb{R}^p$ by a linear mapping $\emb:\mathbb{R}^d\to\Sp\subset \mathbb{R}^p$.  We are using the spherical embedding approach of \cite{Wilson2010}, Algorithm \ref{fig:SphericalEmbeddings}), in which the mapping is chosen to minimize the Frobenius matrix distance between the normalized dissimilarity matrix (in our case, matrix of Euclidean distances between feature vectors in $\feat\left(\nb\right)$) and the matrix of the Euclidean outer product (in $\mathbb{R}^p$) for the vectors in $\left\{\emb\left(\bmath{x}\right):\ \bmath{x}\in\feat\left(\nb\right)\right\}$.  We denote the set of resulting spherical features by $\SB=\left\{\emb\left(\fG\right):G\in\nb\right\}$.\footnote{{\scriptsize We refer to the spherically embedded graph feature vectors as {\em coordinates} of the graph.}}  One of the beneficial properties of such embedding is that it preserves neighborhood properties, i.e., transformed feature vectors close to each other are mapped to vectors which are also close to each other.  For our case, the embedding is {\em locally} spherical as we determine the mapping based only on a small subset of possible observed graphs, and determining the spherical coordinates for the rest by recomputing the embedding iteratively based on the uncovering of new candidate graphs for the neighborhood $\nb$. As a result, the distance preserving property may hold only for the graph neighborhood on which the transformation was estimated.

\begin{algorithm}[t]
\caption{Outline of Spherical Embedding}
\label{fig:SphericalEmbeddings}
\begin{algorithmic}[1]
\REQUIRE{Given dissimilarity matrix $D_{n\times n}$, with $n$ number of graphs.}
\STATE If the spherical point positions are given by $X_{i},\ i=1,\cdots,n$, then $\langle X_{i},X_{j}\rangle = r^{2}\cos\beta_{ij},$ with $\beta_{ij} = \frac{d_{ij}}{r}.$
\STATE If $X$ in unknown, compute for $X$ such that $XX^{T} = Z,$ where $Z_{ij}=r^{2}\cos\beta_{ij}$ and $d_{ij}\in D$. Find the radius of sphere as  $ r^{\star} = \arg\min_{r}\lambda_{1}\{Z(r)\}$. Where $\lambda_1$ is the smallest eigenvalue of $Z$.
\STATE Set $\hat{Z} = \frac{Z}{r^{\star}}$ and $ X^{\star}= \arg\min_{X,x^{T}x=1}\|XX^{T} - \hat{Z}\|$
\STATE Decompose $\hat{Z}$, $\hat{Z}=U\Lambda U^{T}.$ Set the embedding positional matrix to be $X^{\star}= U_{n\times k}\Lambda_{k\times k}^{1/2},$ where $k$ is chosen such that the elements of $U_{n\times k}$ corresponds to the largest $k$ eigenvalues of $\Lambda_{k\times k}$.
\end{algorithmic}
\end{algorithm}

The next step is to estimate a distribution over all possible values in $\Sp$ of the spherical features.  We propose to use von Mises-Fisher directional distribution (denoted in the rest of the paper as VMFD) with a pdf
\begin{equation}
f_{vmfd}(\bmath{y}|\bmath{\mu},\kappa) = \frac{\kappa^{\frac{p}{2}-1}}{(2\pi)^{\frac{p}{2}}I_{\frac{p}{2}-1}(\kappa)}\exp\left(\kappa\bmath{\mu}^T\bmath{y}\right),\ \bmath{y}\in\Sp
\label{eqn:vonmf}
\end{equation}
where $\bmath{\mu}$ is the location parameter, $\kappa\geq0$ is the concentration parameter, and $I_k$ denotes the modified Bessel function of the first kind and order $k$.  VMFD is a member of the exponential family; unlike a general exponential family distribution, it is symmetric with $\bmath{y}=\bmath{\mu}$ serving as both its mean and mode.  Parameter $\kappa$ determines how concentrated the density is around the mode. When $\kappa=0$, VMFD corresponds to the uniform distribution over the hyper-sphere $\Sp$, while as $\kappa\rightarrow\infty$, VMFD is concentrated at the point $\bmath{\mu}$.  See \cite{Mardia2000} for more details on VMFD.  As we wish the example graph $\Gstar$ to be the mode for the distribution over possible graphs, we set the mode of VMFD to $\hat{\bmath{\mu}}=\bmath{e}\left(\bmath{f}\left(\Gstar\right)\right)$. Using MLE, one can compute for $\hat{\kappa}$ from $\SB$ with already set location parameter $\hat{\bmath{\mu}} $i.e. $\hat{\kappa}_{mle} = \frac{p-1}{2(1-\hat{\bmath{\mu}}^{T}\sum_{G\in\SB}\bmath{e}\left(\bmath{f}\left(G\right)\right)/{|\SB|})}$. Given only a single example graph, the concentration parameter is undefined and also, $\hat{\kappa}_{mle}$ can be observed to be a function of the random-walk-based sampling procedure, which is arbitrary. An alternative is to treat $\hat{\kappa}$ as a user-defined parameter controlling how concentrated the region around the example graph is. As a result, one obtains a distribution with density
\begin{equation}
f\left(\bmath{y}\right)\equiv f_{vmfd}\left(\bmath{y}\vert\hat{\bmath{\mu}},\hat{\kappa}\right)\label{eqn:vmfoversphere}
\end{equation}
over possible spherical feature vectors $\bmath{y}\in\Sp$.

We note that one can employ a different family of distributions other than VMDF for the task of modeling spherical data.  Other candidates provide more degrees of freedom, but also more difficult parameter estimation methods \cite{Mardia2000}.  We leave the investigation of this aspect of our algorithm for future work.

\subsection{Sampling under ELSRGM}
Using the ideas and steps from the previous subsection, we estimate a probability density function over a unit sphere on the domain of possible spherical feature values for graphs.  Before proceeding further, however, we need to resolve several issues.  First, we are interested in a discrete probability distribution over a very large but {\em finite} set of possible features $\SB$.  How would one obtain such a distribution $P$ from the VMFD density?  Second, what portion of the mass of $f$ is to be associated with each of the coordinates for possible graphs $G\in\Gn$ (or $\bmath{y}\in\SB$)?  Third, it is infeasible to enumerate all of the possible graphs in $\Gn$, and it may be infeasible to consider all possible spherical feature vectors.  How can one perform sampling in $\Gn$ so that the resulting spherical features are distributed according to $P$? The following subsections will outline our approach to resolving these issues.

\subsubsection{Density as an Approximation to the Smoothed Distribution}
As a first step, consider a setting where all possible spherical feature vectors $\SB$ can be enumerated for a fixed number of nodes $n$.  We propose to approximate VMFD using a mixture of kernels with one mixture component for each graph's spherical coordinates.  (Alternatively, one can consider one mixture component for each possible spherical feature vector in $\SB$.)  Intuitively, we consider the density $f$ to be a baseline approximation to the distribution obtained by smoothing a discrete distribution over spherical feature values $\SB$.  More formally, for a set of graphs $\mathcal{G}$, we will index its elements $G_1,\dots,G_K$ with $K=\left|\mathcal{G}\right|$, and denote the spherical feature vector for graph $G_k$ by $\bmath{y}_k=\emb\left(\bmath{f}(G_{k})\right)\in\SB$ .

Assume $f\left(\bmath{y}\right)$ is the estimated density in Equation (\ref{eqn:vmfoversphere}).  Let
\begin{equation}
\hat{f}\left(\bmath{y}\vert\bmath{\pi}\right)=\sum_{k=1}^K\pi_kh_k\left(\bmath{y}\right)\label{eqn:kernelmixture}
\end{equation}
with $h_k\left(\bmath{y}\right)\equiv f_{vmfd}\left(\bmath{y}\vert\bmath{y}_k,\kappa_{h_k}\right)$, where $\bmath{y}_k$ is the location for kernel $h_k$, and all of the kernels share a user defined concentration (bandwidth) $\kappa_{h_k}$ chosen to assign more weight to points closer to $\bmath{y}_k$.  The parameters $\bmath{\pi}=\left\{\pi_k:k=1,\dots,K\right\}$, $\pi_k\geq 0$, $\sum_{k=1}^K\pi_k=1$ are probabilities associated with each $\bmath{y}_k$.

The estimation of the $\pi$'s is carried out by optimizing with respect to the Kullback-Leibler (KL) information:
\begin{eqnarray}
J\left(\bmath{\pi}\right) = KL\left(\hat{f}\left(\cdot\vert\bmath{\pi}\right)\parallel f\right) - \lambda\left(\sum_{k=1}\pi_k-1\right)
\label{eqn:piobjective}
\end{eqnarray}
where $\lambda$ is the Lagrange multiplier.  It can be shown that the objective function $J\left(\bmath{\pi}\right)$ in \eqref{eqn:piobjective} is convex, and it can be minimized efficiently using standard convex optimization techniques.

\subsubsection{Metropolis-Hastings Algorithm}
The goal is to draw samples from $\Gn$ according to the probability mass $\{\pi_1,\dots,\pi_K\}$, where  $K=|\Gn|$. If the number of nodes $n$ is small ($\leq12$), random graphs can be enumerated using {\tt nauty} (\cite{McKay1981}) and it is possible to identify all possible feature values to compute probability masses $\pi_1,\dots,\pi_K$ associated with each graph.  Graphs can then be sampled directly according to the resulting multinomial distribution. When all possible graphs of $n$-nodes (for $n\leq 10$) are observed, an alternative is to make use of the Metropolis-Hastings approach as outlined in Algorithm \ref{alg:MCMC}.  However, in practice, the number of nodes is usually too large to explicitly consider all possible graphs, and the initial neighborhood $\nb$ would include only a small portion of all graphs.  We propose an approach that will allow us to draw samples from the distribution over graphs including the graphs in $\Gn\setminus\nb$.

For better understanding, we first propose a graph generation approach assuming $n$ is small, and all graphs $G_{1},\dots,G_{K}$ can be enumerated. We will draw a point in the embedding space, and then employ Markov Chain Monte Carlo (MCMC) approach to draw a graph ``corresponding'' to this point by constructing a Markov chain in the space $\Gn$ of graphs.   The pseudocode is presented in Algorithm \ref{alg:MCMC}.  If we could draw samples directly from $\hat{f}$, this procedure would be equivalent to drawing a vector $\bmath{y}^\star\sim \hat{f}$, and then trying to identify out which of the $K$ components was used to generate $\bmath{y}^\star$ by using MCMC in the posterior over $k=\left\{1,\dots,K\right\}$.  The only approximation employed in this case is that instead of drawing $\bmath{y}^\star\sim\hat{f}$, we are drawing $\bmath{y}^\star\sim f$, but according to the KL-divergence measure (Figure \ref{fig:g8snythetic-convergence-8762}), $\hat{f}$ and $f$ are very close.

\begin{algorithm}[t]
\caption{MCMC Algorithm for Sampling Graphs}
\label{alg:MCMC}
\begin{algorithmic}[1]
\REQUIRE{$f$ is given.}
\STATE Draw a sample $\bmath{y}^\star\sim f$. Set $t=0$, and perturb the example graph $G\in\Gn$ to generate a random graph $G^0\in\Gn$ to initialize the chain.
\REPEAT
    \STATE $t\leftarrow t+1$.
    \STATE Sample a proposal graph $G^\prime$ from the proposal distribution at time $t$, $Q(G^\prime|G^{t-1})$.
	\STATE Compute ratio, $ r = \frac{P\left(G^\prime|\bmath{y}^\star\right)Q\left(G^{t-1}|G^\prime\right)}{P\left(G^{t-1}|\bmath{y}^{\star}\right)Q\left(G^\prime|G^{t-1}\right)} $\\
where $ P(G=G_{k}|\bmath{y}) \propto \pi_kh_k\left(\bmath{y}\right).$
    \STATE $ G^{t} = \left\{
			     \begin{array}{lr}
			       G^\prime & \ with\  prob\ min(r,1)\\
			       G^{t-1} & o.w.
			     \end{array}
			   \right.$
\UNTIL{convergence of the chain}
\end{algorithmic}
\end{algorithm}

To sample from the von Mises-Fisher distribution $f$ we follow the approach outlined in \cite{Wood1994}. For proposal distribution $Q\left(\cdot\vert G^{t-1}\right)$ we consider a uniform distribution over graphs one edge insertion/deletion away from $G^{t-1}$.  In our experiments on small-and large-sized graphs, with the number of iterations set to $T=1000$, the algorithm is observed to converge and does produce graphs with topologies that resemble the observed graph.

For larger graphs, however, we cannot compute explicitly $\pi_k$ for all $K$ graphs.  This situation can be thought of as similar to the case of {\em countably infinite} number of objects in which a Dirichlet process can be employed to assign some weight to graphs that are not yet observed.  It is as if, $f$ is approximated by a countably infinite mixture $\hat{f}\left(\bmath{y}\right)=\sum_{k=1}^\infty\pi_kh_k\left(\bmath{y}\right)$, i.e., assuming that the number of graphs is countably infinite instead of just very large.  In this case, we are employing Dirichlet process mixture model $DP\left(\alpha,G_0\right)$, where $\alpha$ is a concentration parameter and $G_0$ is a uniform distribution $u_p\left(\bmath{\mu}\right)\propto 1,\ \bmath{\mu}\in\Sp$ for the kernel location since assume that each yet unobserved set of features is equally likely.

Assuming that $K$ different graphs have been observed, equation \ref{eqn:kernelmixture} then becomes
\begin{eqnarray*}
\hat{f}\left(\bmath{y}\vert\bmath{\pi}\right) = \sum_{k=1}^K\pi_kh_k\left(\bmath{y}\right)+\frac{\alpha}{K+\alpha}u_p\left(\bmath{y}\right)\label{eqn:kerneldensitydp}
\end{eqnarray*}

Algorithm \ref{alg:MCMCDP} details the pseudocode for sampling of small to large graphs on a spherical space.  For basic information on Dirichlet process mixtures and their inference see \cite{Neal00}.

\begin{algorithm}[t]
\caption{MCMC-DP Algorithm for Sampling Graphs}
\label{alg:MCMCDP}
\begin{algorithmic}[1]
\REQUIRE{$f$ is given. Graphs $G_{1},\dots,G_{L}$ initially observed.
Set $K=L$.  Estimate $\bmath{\pi}$ by minimizing \eqref{eqn:piobjective}.}
\STATE Draw a sample $\bmath{y}^\star\sim f$. Set $t=0$, and perturb\footnote{{\scriptsize perform edge random edge insertion/deletion atmost twice}} the example graph $G\in\Gn$ to generate a random graph $G^0\in\Gn$ to initialize the chain.
\REPEAT
\STATE $t\leftarrow t+1$.
	\STATE Sample a proposal graph $G^\prime$ from the proposal distribution at time $t$, $Q(G^\prime|G^{t-1})$.
	\STATE Compute ratio, $r= \frac{P(G^\prime\vert\bmath{y}^\star)Q(G^{t-1}|G^\prime)}{P(G^{t-1}|\bmath{y}^\star)Q(G^\prime|G^{t-1})}$
\STATE where $$P(G=G_{k}|\bmath{y})\propto
\begin{cases}
\pi_kh_k\left(\bmath{y}\right)&k=1,\dots,K,\\\frac{\alpha}{K+\alpha}u_p\left(\bmath{y}\right)&G=new.
\end{cases}$$
    \STATE Set $ G^{t} = \left\{
			     \begin{array}{lr}
			       G^\prime & \ with\  prob\ min(r,1)\\
			       G^{t-1} & o.w.
			     \end{array}
			   \right.$
    \STATE If $G^t$ is new, then set $K=K+1$, $G_{K}=G^t$.
      \begin{itemize}
        \item Compute $\bmath{f}(G_{K})$, set $\nb = \nb \cup \bmath{f}(G_{K})$.
        \item Re-compute $\SB$ ({\small to generalize to new samples}) and re-estimate $\pi_1,\dots,\pi_K$.
      \end{itemize}
\UNTIL{convergence of the chain}
\end{algorithmic}
\end{algorithm}
\section{Experimental Evaluation}\label{sec:experiments}
We adopt the common goodness of fit measures in network generation studies to investigate how well our model fits the observed data (i.e. by comparing the observed statistics with a range of the same statistics obtained from simulating many networks using the fitted model) \cite{Hunter2008a}. If the observed network is not typical of the simulated networks for a particular measure then the model is either degenerate or simply a misfit. We first consider the {\em degree distribution}, which is defined as the statistics: $D_{0},D_{1},\cdots,D_{n-1}$, with each $D_{i}$ representing the number of nodes with $i$ edges connected to them, divided by $n$. Secondly, we compute the {\em edgewise shared partner distribution}, which is defined as the statistics: $EP_{0},EP_{1},\cdots,EP_{n-2}$, with each $EP_{i}$ representing the number of edges in the graph between two nodes that share exactly $i$ neighbors in common, divided by the total number of edges. Thirdly, we compute the {\em triad census distribution}, which is the  proportion of $3-node$ sets having $0,1,2,$ or $3$ edges among them. Lastly, we compute the {\em minimum geodesic distance distribution}; which is the proportion of pairs of nodes whose shortest connecting path is of length $k$, for $k = 1,2,\cdots$.  Also, pairs of nodes that are not connected are classified as $k =\infty$.

In all our experiments, we define each feature vector to be, $$\fG=\left(f_{edge},f_{2\star},\cdots,f_{(11)\star},f_{\triangle}\right)^{T}$$ These features have been observed to be among the set of subgraph patterns that capture social interactions and network formation in real processes \cite{FrankStrauss}.

\subsection{Small graphs}
We first consider the experiments for $\mathcal{G}_n$ with $n=8$ nodes. Figure \ref{fig:test8graphs} (left) shows a two component $8$-node synthetic graph, $G_{test_1}$, while Figure \ref{fig:g8snythetic-convergence-8762} displays the $KL$-divergence $KL\left(\hat{f}\parallel f\right)$ computed by first observing $300$ $8$-node graphs (graph-edit distance $\leq3$ of neighborhood around $G_{test_1}$), and allowing the {\tt MCMC-DP} (Algorithm \ref{alg:MCMCDP}) to discover and fill-in the neighborhood set $\nb$, with the baseline $VMFD$ centered at the coordinates of $G_{test_1}$. $G_{test_1}$ was chosen to test our hypothesis for the extended feature space, that is if $\xstar\in\mbox{rint}(\bar{C})$ $\mbox{e.g.}\ \xstar=\bmath{f}\left(G_{test1}\right)\in\mbox{rint}\left(\bar{\CH}\right),$ then the corresponding
$ERGM$ will be degenerate in relation to the analysis of Theorem \ref{thm:max inner product}. $G_{test_2}$ is used to test our second hypothesis,
i.e. if $\xstar\in\mbox{rbd}(\bar{C})$ then the ERGM specified by $\xstar$ is most likely to generate realistic graphs. In Figure
\ref{fig:test8graphs} (right) we show an $8$-node synthetic graph $G_{test_2}$ whose extended feature vector
($\xstar=\bmath{f}\left(G_{test2}\right)\in\mbox{rbd}\left(\bar{\CH}\right)$).

For the experimental set-up, the bandwidth for each kernel $h_{k}$  is treated as a user defined
parameter and is set to $\kappa_{h_k}=400$, while the DP prior parameter is set to $\alpha=0.5$. Parameters for the $VMFD$ are set as discussed in
section \ref{sec:elsrgm}, with $\kappa$ fixed at $140$ to control the concentration of $f_{VMFD}$. We compute the parameters for the ERGM model $P$
and simulations using the {\sf statnet} package \cite{Handcock2008}.

Figure \ref{fig:ergm-elsrgm-g8sampledStats1}, depicts $100$ simulated summary results for
$\bmath{f}(G_{test_1})\in\mbox{rint}\left(\bar{\CH}\right)$.  The result obtained from our \ELSRGM\  (in unshaded blue box plots), displays no sign of degeneracy while the $ERGM$ result shows summary statistics that appear to relate to those of empty and complete graphs- a sign of
degeneracy. Figure \ref{fig:ergm-elsrgm-g8sampledStats2}, shows summary statistics of simulated networks obtained for the $ERGM$ and the \ELSRGM\ models when $\bmath{f}(G_{test_2})\in\mbox{rbd}\left(\bar{\CH}\right)$. The results show no sign of degeneracy from both the specified $ERGM$ and the proposed \ELSRGM\ . This confirms our second hypothesis, that is having extended feature vectors lie on the relative boundary of the convex hull enables the generation of realistic graphs from exponential family models.

\begin{figure*}[t]
\begin{center}
\includegraphics[width=2in]{g8-8762.pdf}\hspace*{0.5cm}
\includegraphics[width=2in]{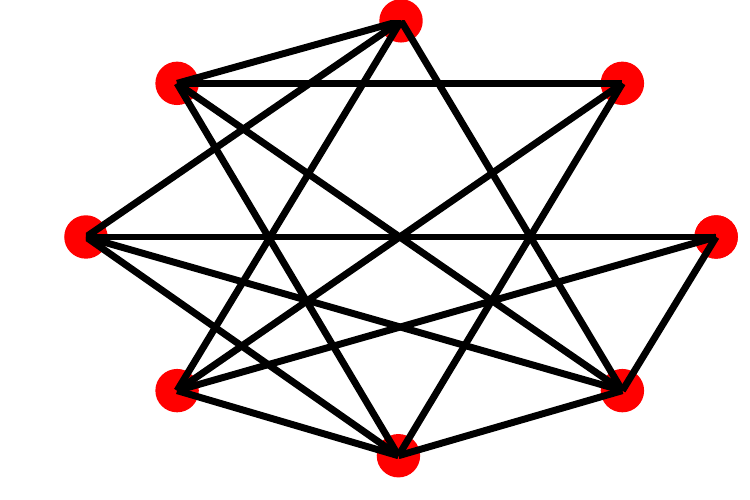}
\caption{Two 8-node synthetic test graphs.  {\bf Left:} 2-component graph $G_{test1}$ falling inside the relative interior of the convex hull $\bar{\CH}$ of extended features.  {\bf Right:} $G_{test2}$ falling on the relative boundary of the convex hull $\bar{\CH}$ for the extended feature space.}\label{fig:test8graphs}
\vspace*{0.5cm}
\includegraphics[scale=0.85]{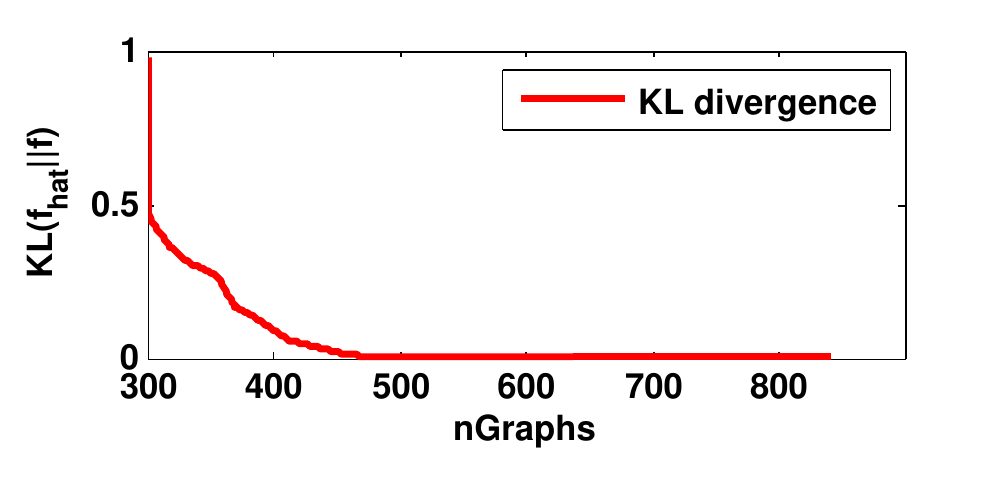}
\caption{Sample $KL$-divergence $KL\left(\hat{f}\parallel f\right)$ for \ELSRGM\ as more graphs get uncovered.  The initial neighborhood $\nb\subset\Gn (n=8)$ is built around $G_{test1}$.}
\label{fig:g8snythetic-convergence-8762}
\end{center}
\end{figure*}

\begin{figure}[htb]
  \begin{minipage}[t]{0.49\linewidth}\centering
    \includegraphics[width=2in]{g8-8762.pdf}
    \centerline{(a)}
  \end{minipage}\hfill
  \begin{minipage}[t]{0.49\linewidth}\centering
  \includegraphics[width=2.5in]{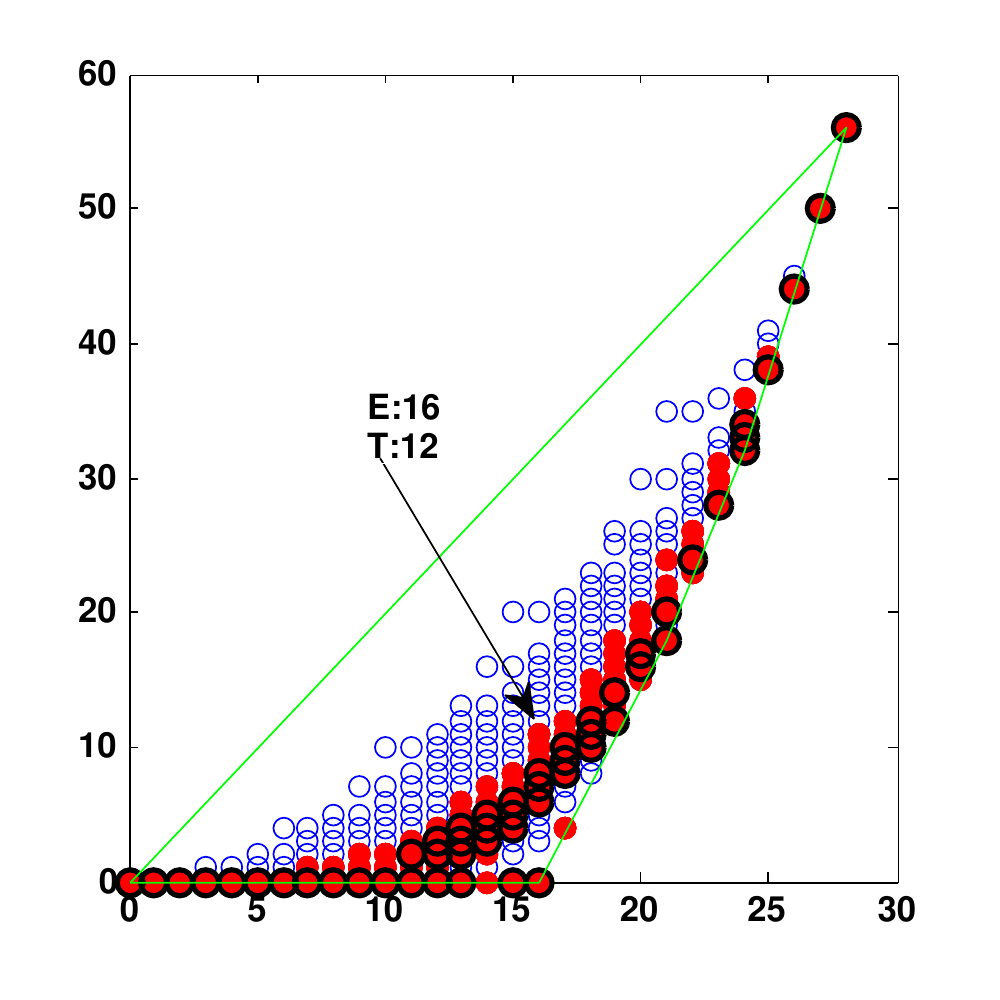}
    \centerline{(b)}
  \end{minipage}
  \bigskip
  \begin{minipage}[t]{0.49\linewidth}\centering
\begin{tabular}{cccc}
\hspace{-1cm}\includegraphics[scale=0.5]{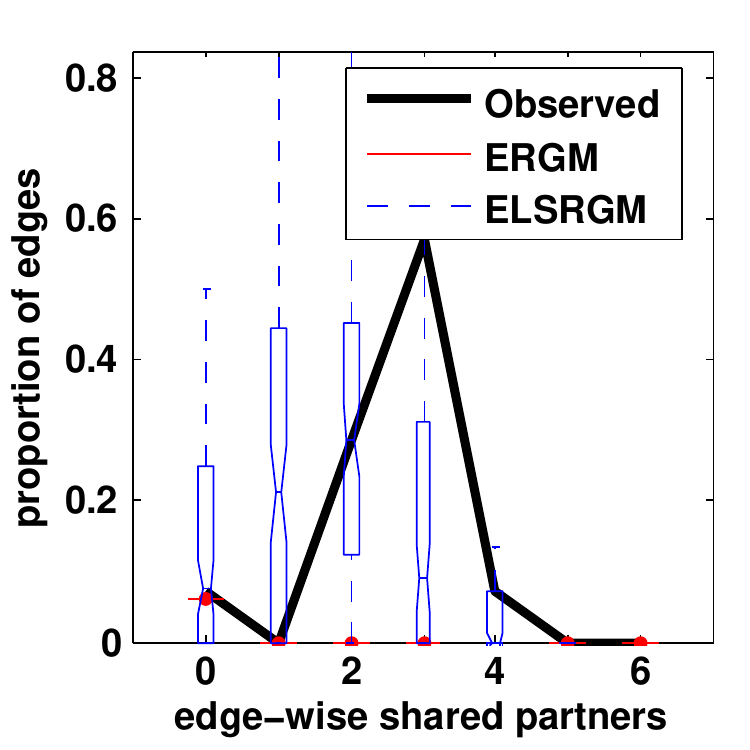} & \includegraphics[scale=0.5]{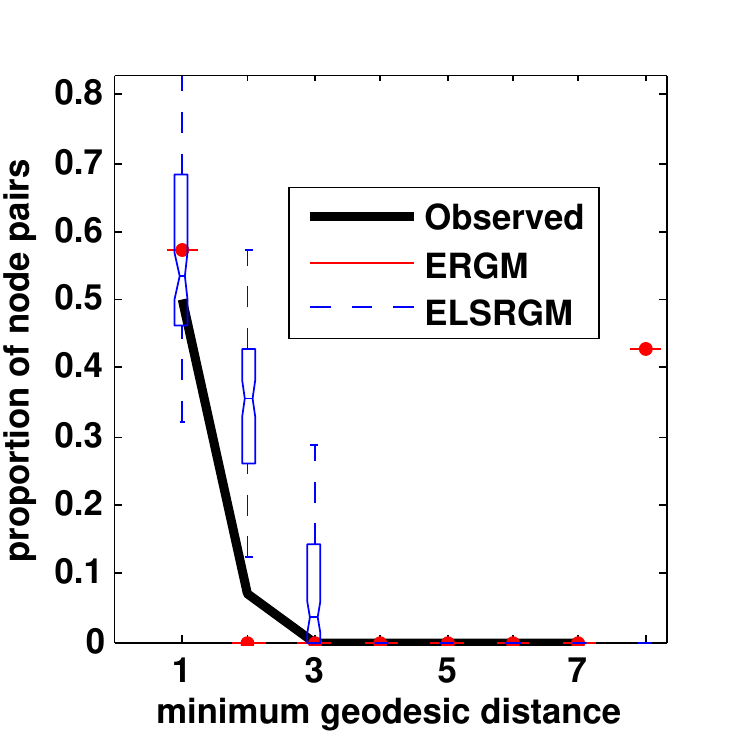} \includegraphics[scale=0.5]{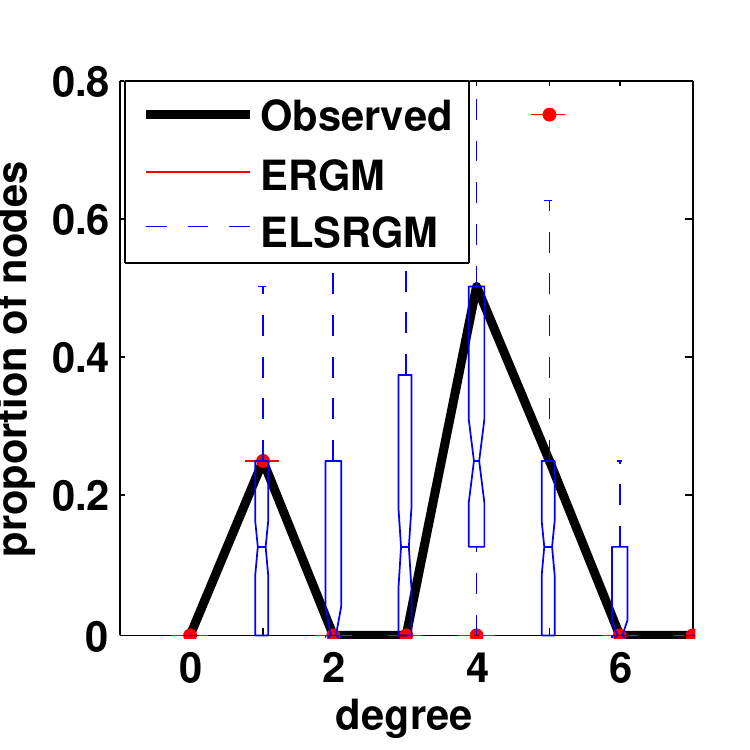}&\includegraphics[scale=0.5]{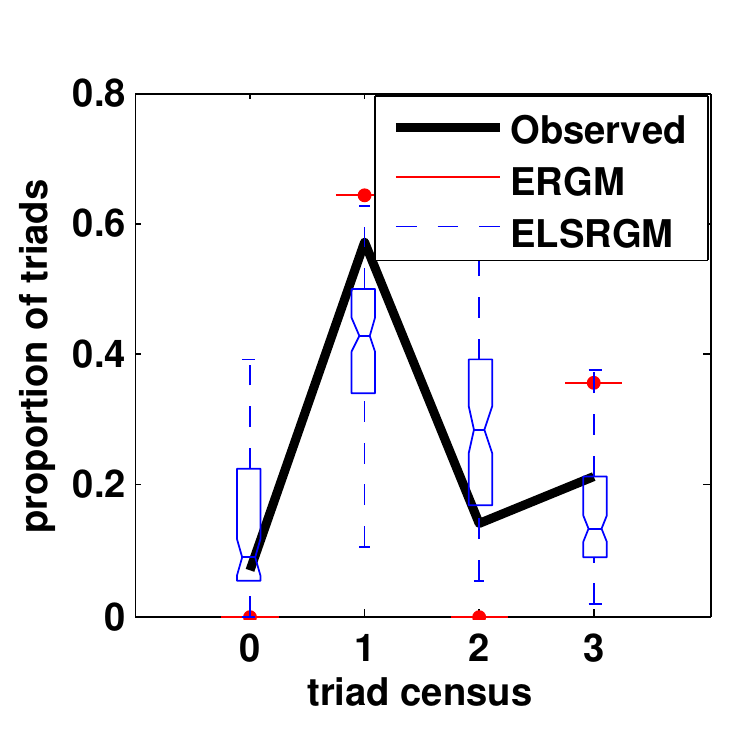}
\end{tabular}
    \medskip
    \centerline{(c)}
  \end{minipage}\hfill
  \caption{(a) Synthetic graph $G_{test_1}$; (b) Corresponding feature-pair: $\bmath{f}(G_{test_1})\in\mbox{rint}\left(\bar{\CH}\right)$; (c) Simulations from models given $G_{test_1}$. The observed statistics are indicated by the solid lines; the box plots include the median and interquartile range of simulated networks. $ERGMs$ show low variance and all $100$ samples seem to be placed on the same network- sign of degeneracy.}
  \label{fig:ergm-elsrgm-g8sampledStats1}
\end{figure}

\begin{figure}[htb]
  \begin{minipage}[t]{0.49\linewidth}\centering
    \includegraphics[width=2in]{g8-11000.pdf}
    \centerline{(a)}
  \end{minipage}\hfill
  \begin{minipage}[t]{0.49\linewidth}\centering
  \includegraphics[width=2.5in]{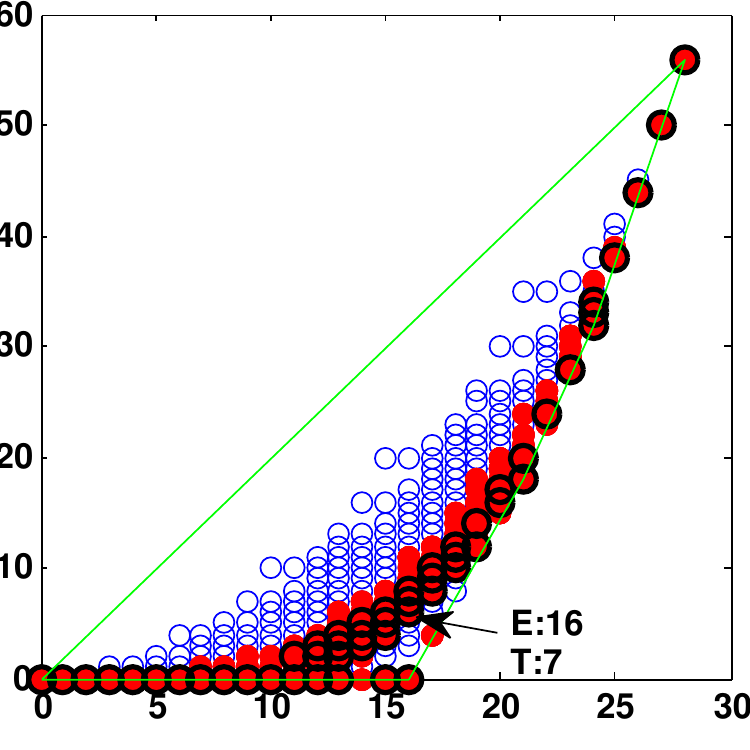}
    \centerline{(b)}
  \end{minipage}
  \bigskip
  \begin{minipage}[t]{0.49\linewidth}\centering
\begin{tabular}{cccc}
\hspace{-1cm}\includegraphics[scale=0.5]{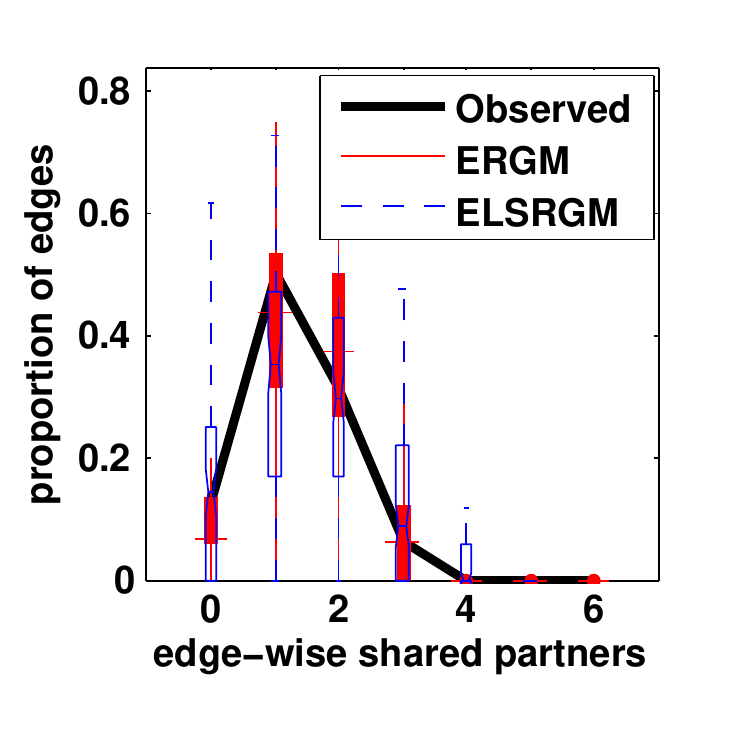} & \includegraphics[scale=0.5]{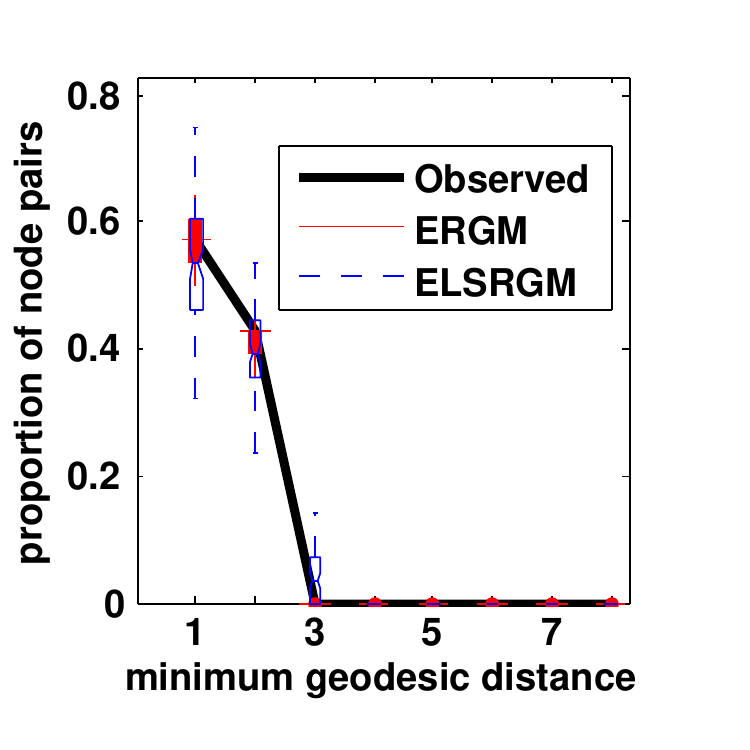}
\includegraphics[scale=0.5]{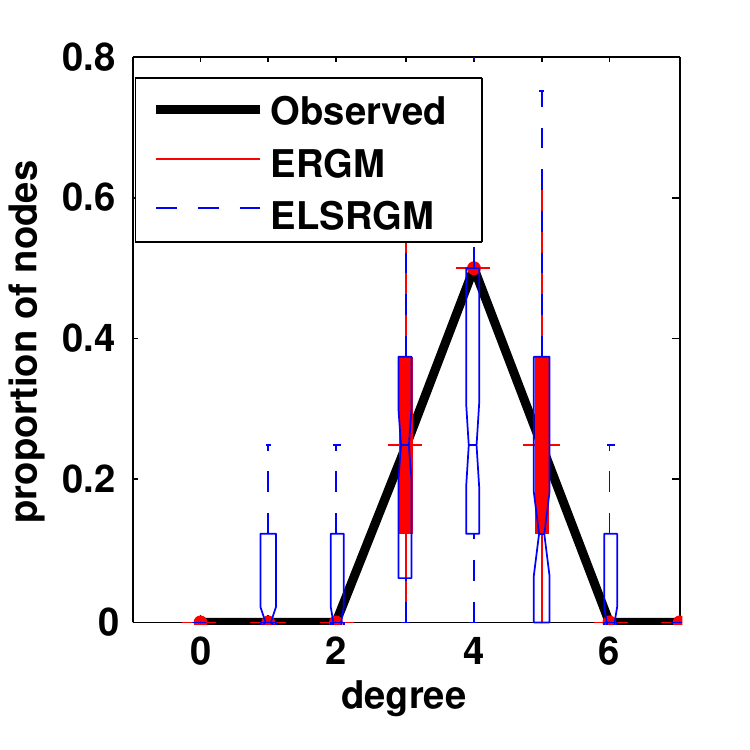}&\includegraphics[scale=0.5]{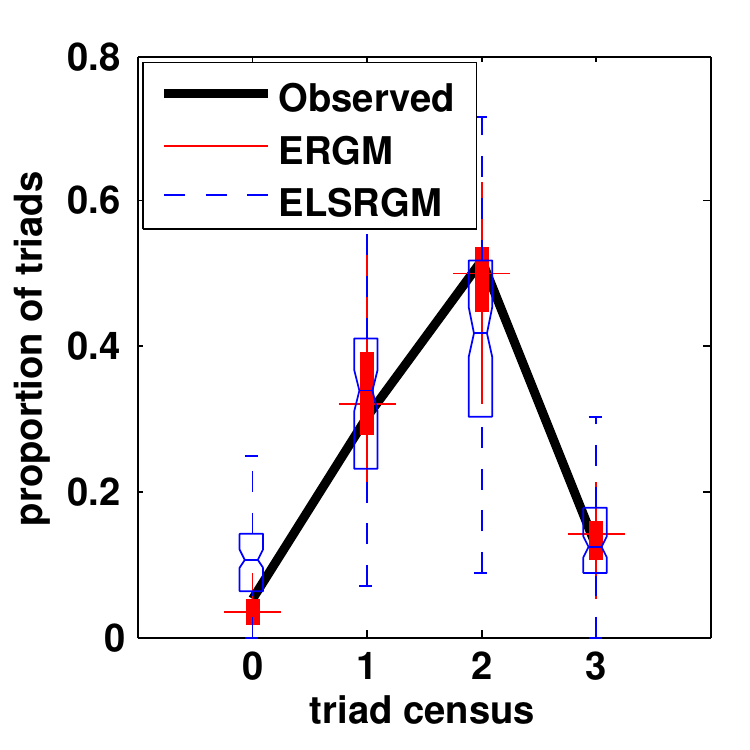}
\end{tabular}
    \medskip
    \centerline{(c)}
  \end{minipage}\hfill
  \caption{(a) Synthetic graph $G_{test_2}$; (b) Corresponding feature-pair: $\bmath{f}(G_{test_2})\in\mbox{rbd}\left(\bar{\CH}\right)$; (c) Simulations from models given $G_{test_2}$. The observed statistics are indicated by the solid lines; the box plots include the median and interquartile range of simulated networks. Both models are non-degenerate.}
  \label{fig:ergm-elsrgm-g8sampledStats2}
\end{figure}


\subsection{Larger Graphs}
For larger data sets, we first consider a undirected {\tt Dolphin} social network with $62$ nodes \cite{DuBois2008,Lusseau2003}. We apply the {\tt MCMCDP} approach as outlined in Algorithm \ref{alg:MCMCDP}, by sampling graphs via a random walk for $1000$ steps starting from a random network $G^0\in\nb$.  Figure \ref{fig:ergm-elsrgm-dolphin} summarizes the results of $100$ simulations for the {\tt Dolphin} network from the two approaches.  The proposed \ELSRGM\  shows a relatively better performance of capturing the distribution of statistics of the {\tt Dolphin} network.  It is appears that ERGM model specified by simple statistics is incapable of generating the distribution of statistics that resembles those of the observed network. The lack of fit by an ERGM model in the degree distribution, geodesic distance, and shared partners distribution indicates the presence of degeneracy.

\begin{figure*}[!hp]
\centering
\begin{tabular}{cccc}
\hspace{-1cm}\includegraphics[scale=0.5]{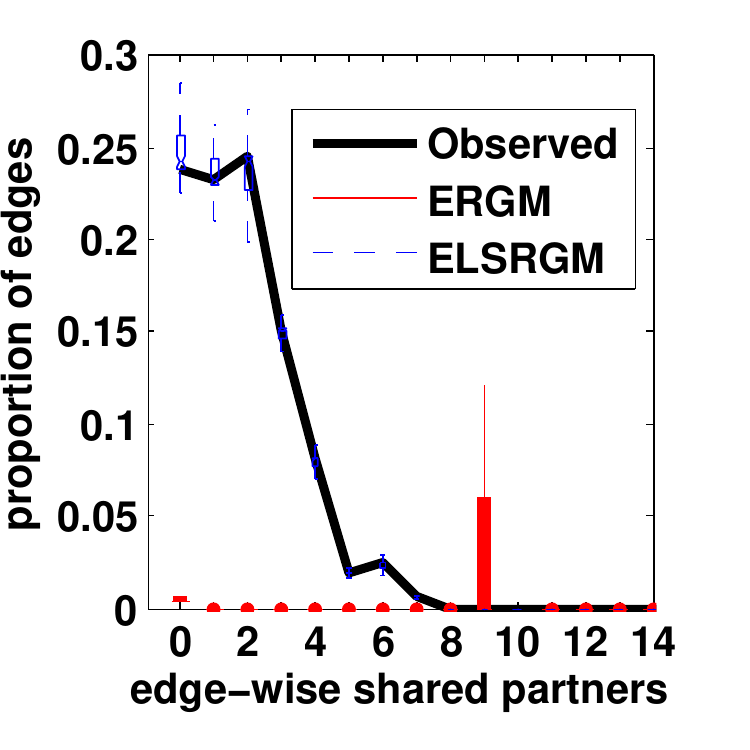} & \includegraphics[scale=0.5]{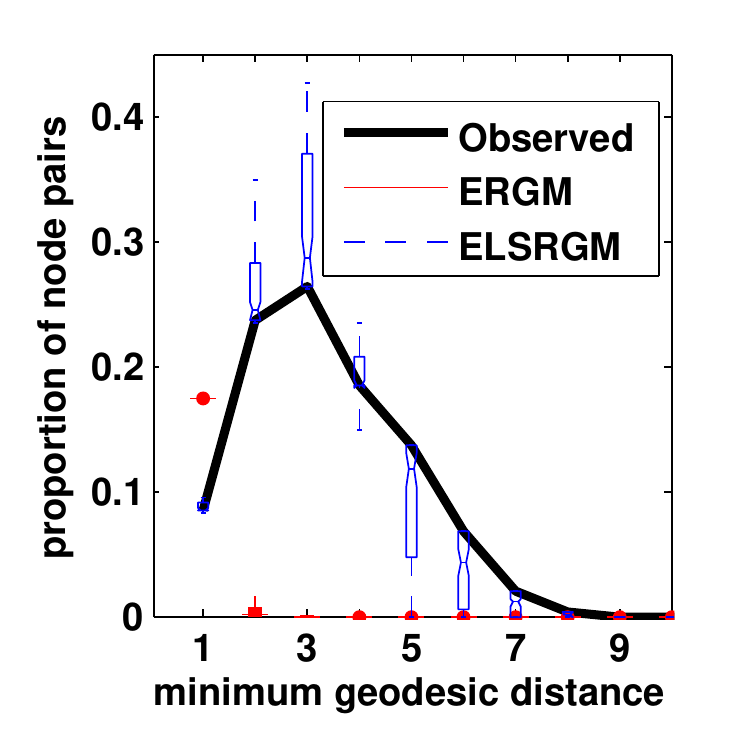}&\includegraphics[scale=0.5]{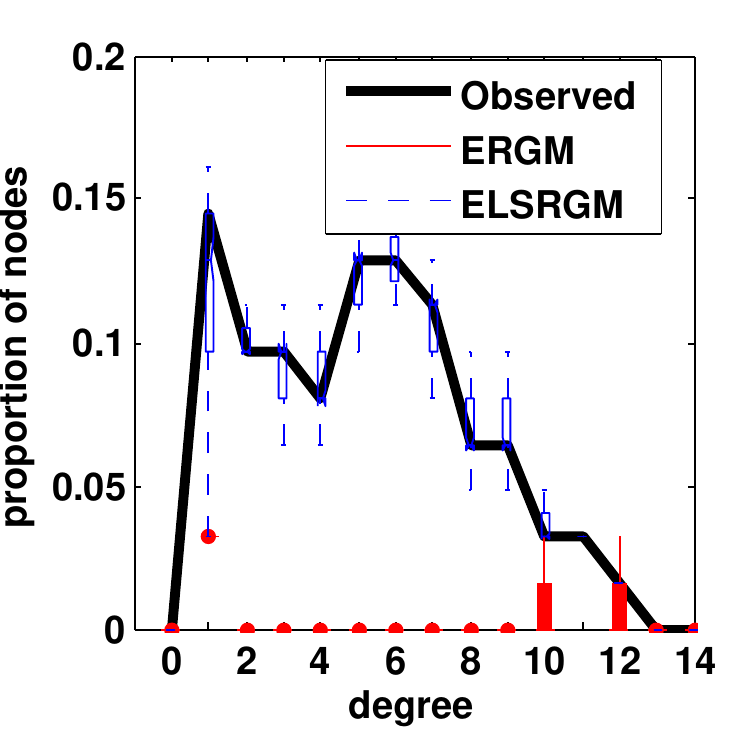} &\includegraphics[scale=0.5]{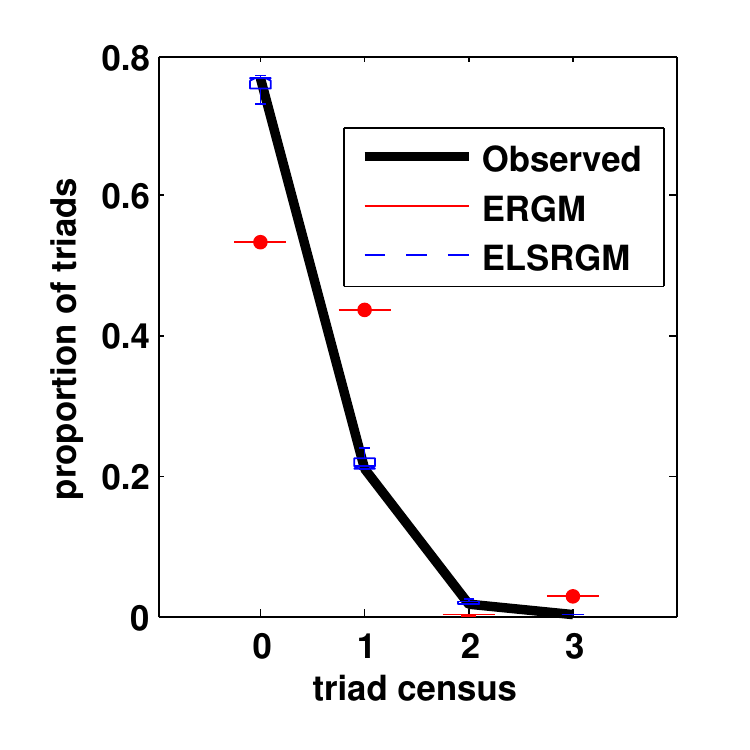}
\end{tabular}
\caption{{\tt Dolphin} $62$-node network summaries. The observed statistics are indicated by the solid lines; the box plots include the median and interquartile range of simulated networks. $ERGM$ plots shows signs of degeneracy effect.}
\label{fig:ergm-elsrgm-dolphin}
\end{figure*}

We next evaluate \ELSRGM\ on $205$-node {\tt Faux-Mesa-High} social network \cite{Handcock2008,Resnick1997}.
We again apply the {\tt MCMC-DP} algorithm  by sampling graphs via a random walk for $1000$ steps starting from a random network $G^0\in\nb$.
Figure \ref{fig:ergm-elsrgm-faux} depicts the results of $100$ simulated networks from the specified \ELSRGM\ and $ERGM$ models. Again, we observe that \ELSRGM\ generates distributions of statistics that resemble those of the observed network while ERGMs shows signs of misspecification.


\begin{figure*}[!hp]
\centering
\begin{tabular}{cccc}
\hspace{-1cm}\includegraphics[scale=0.5]{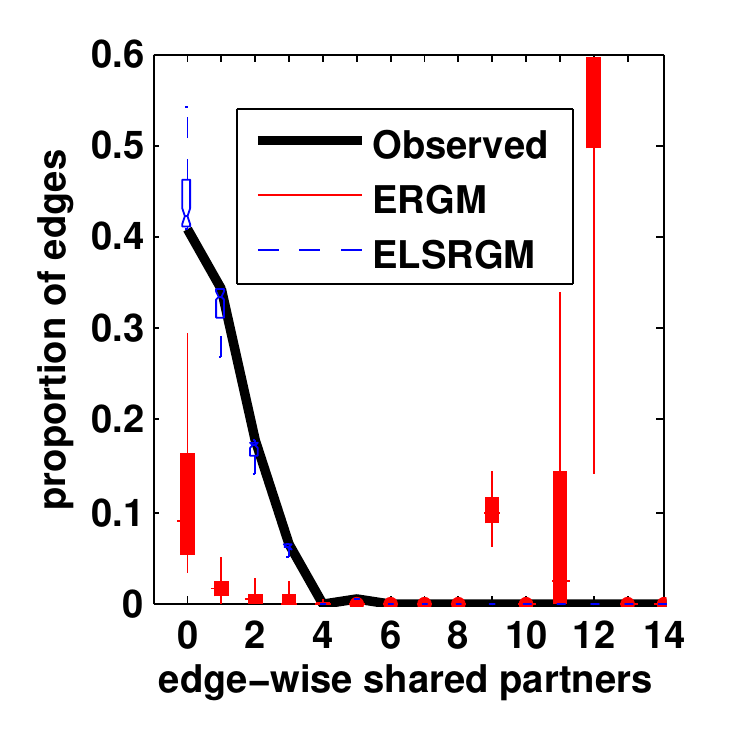} & \includegraphics[scale=0.5]{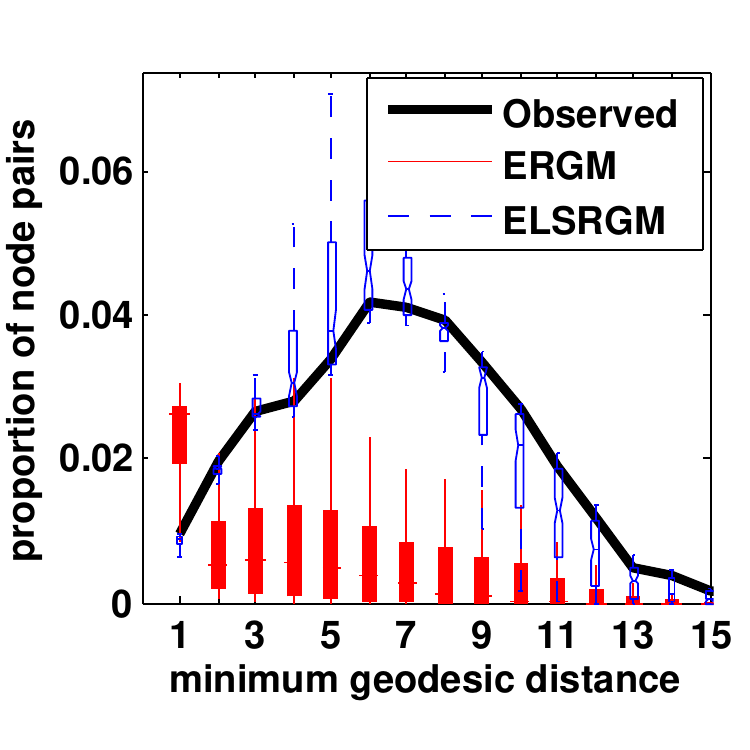} & \includegraphics[scale=0.5]{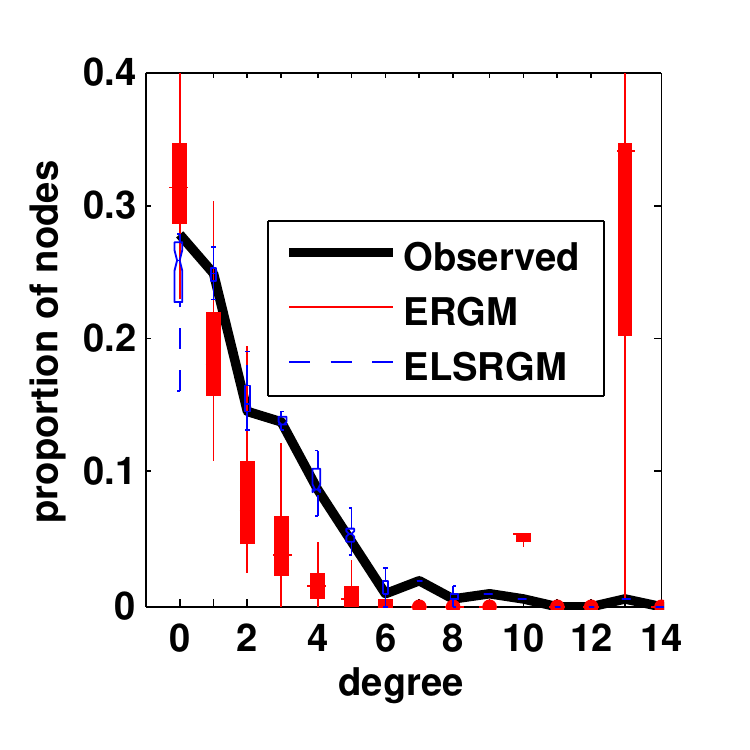}&\includegraphics[scale=0.5]{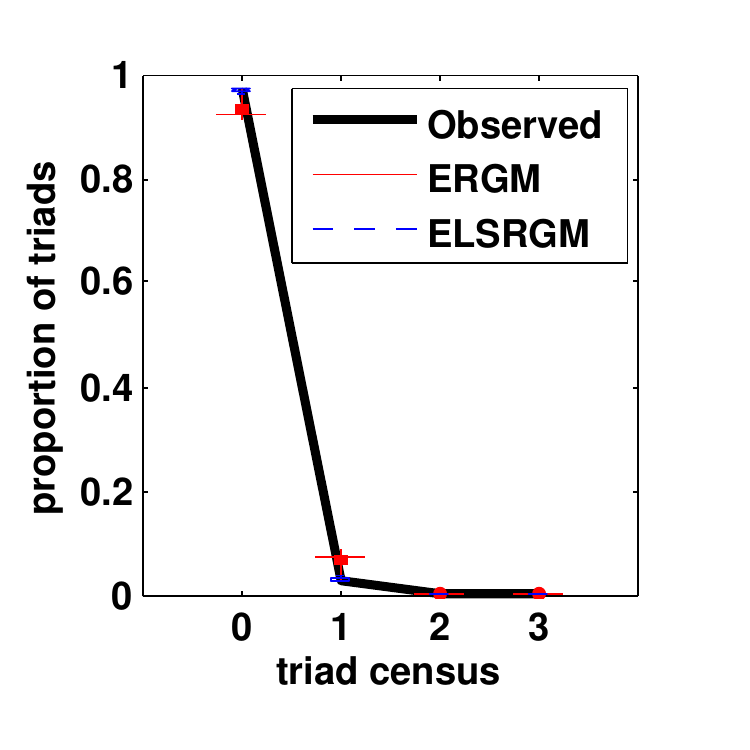}
\end{tabular}
\caption{{\tt Faux-Mesa-High} $205$-node network summaries. The observed statistics are indicated by the solid lines; box plots summarize the statistics for the simulated networks. $ERGM$ plots shows signs of degeneracy effect.}
\label{fig:ergm-elsrgm-faux}
\end{figure*}
Finally, we assess \ELSRGM's performance on a benchmark data set from social network analysis, which is used in testing whether a model can overcome the degeneracy phenomenon. We consider the first matrix ($39$ nodes) of the Kapferer's tailor shop data \cite{Kapferer1972}.  The result of $100$ simulated network summaries shown in Figure \ref{fig:ergm-elsrgm-kaiferer} suggests that the proposed \ELSRGM\  does not suffer from degeneracy. ERGM results were not included for this data set due to severe unstable estimation of MLE using simple specifications considered in our analysis (corresponding to the first type of degeneracy).

\begin{figure*}[!tp]
\centering
\begin{tabular}{cccc}
\hspace{-1cm}\includegraphics[scale=0.49]{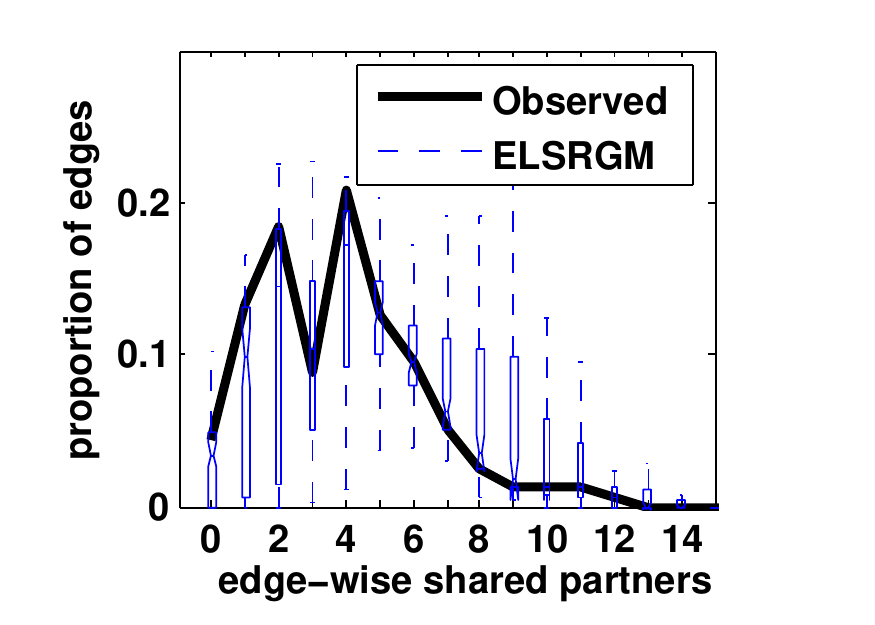}&\hspace{-1cm}\includegraphics[scale=0.49]{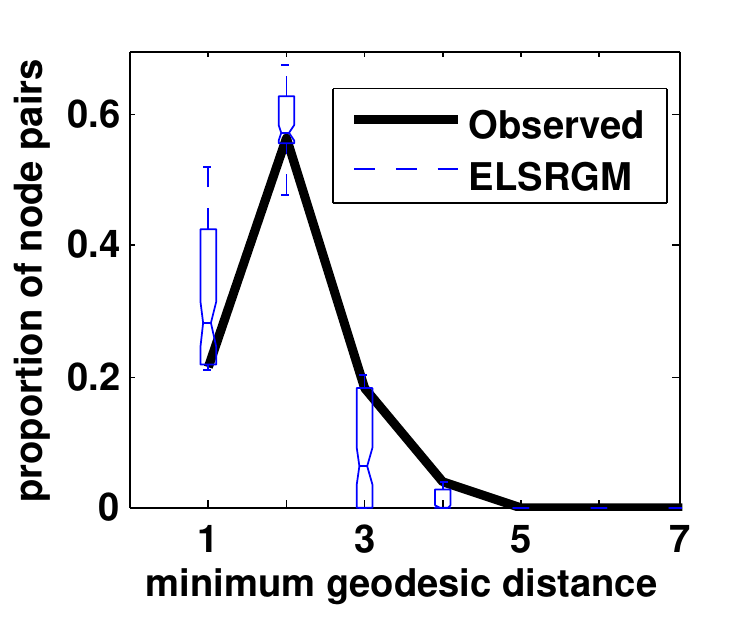} \includegraphics[scale=0.49]{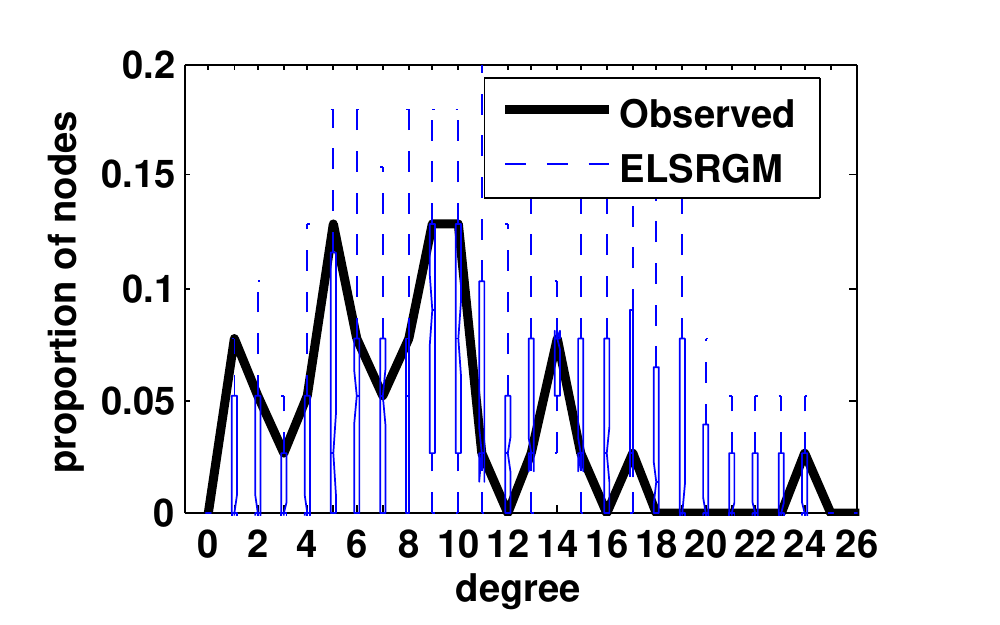}&\hspace{-1cm}\includegraphics[scale=0.49]{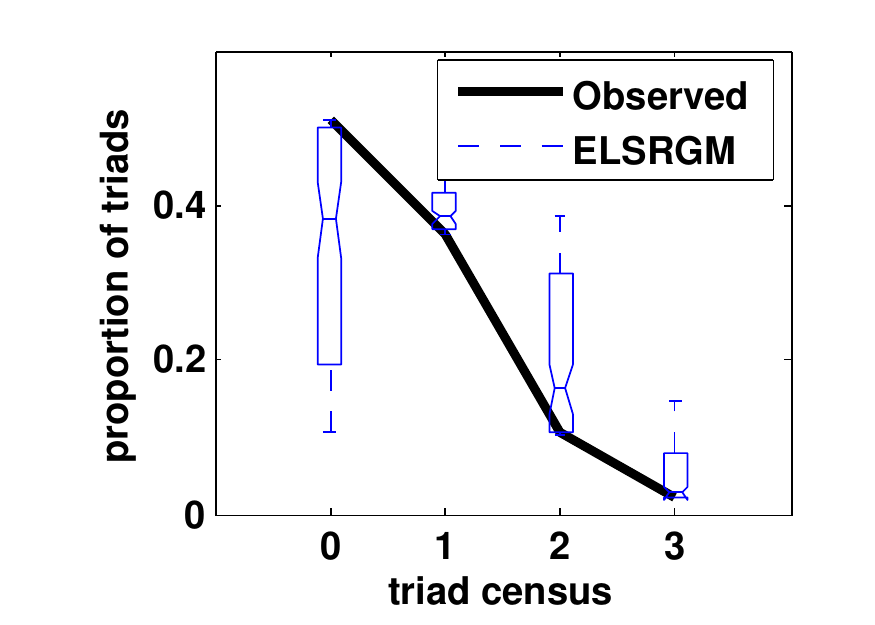}
\end{tabular}
\caption{Kapferer's $39$-node network summaries. The observed statistics are indicated by the solid lines; box plots summarize the statistics for the simulated networks using {\tt ELSRGM}.}
\label{fig:ergm-elsrgm-kaiferer}
\end{figure*}

\begin{figure*}[!p]
\centering
\begin{tabular}{cc}
\includegraphics[scale=0.5]{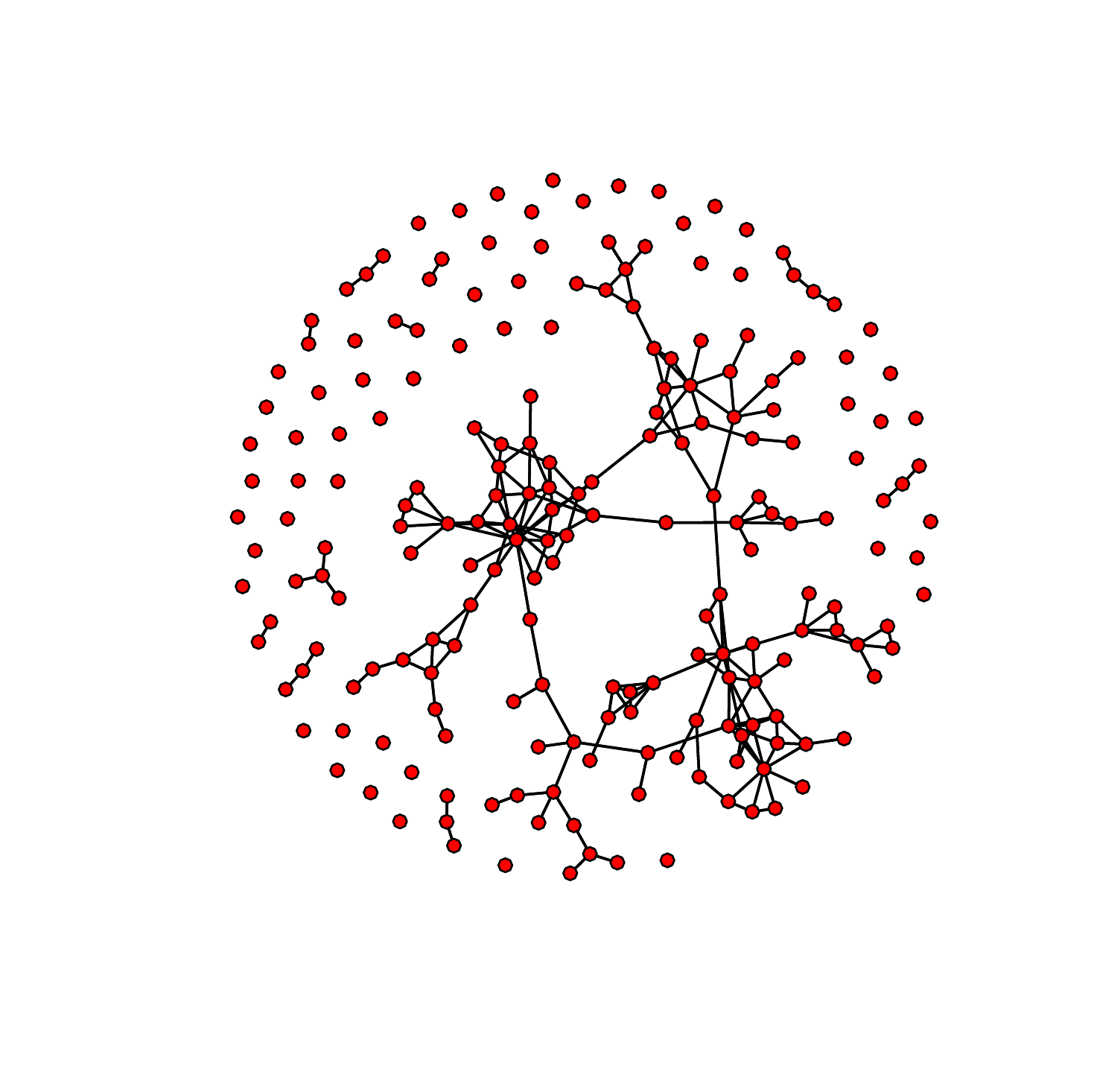} & \includegraphics[scale=0.5]{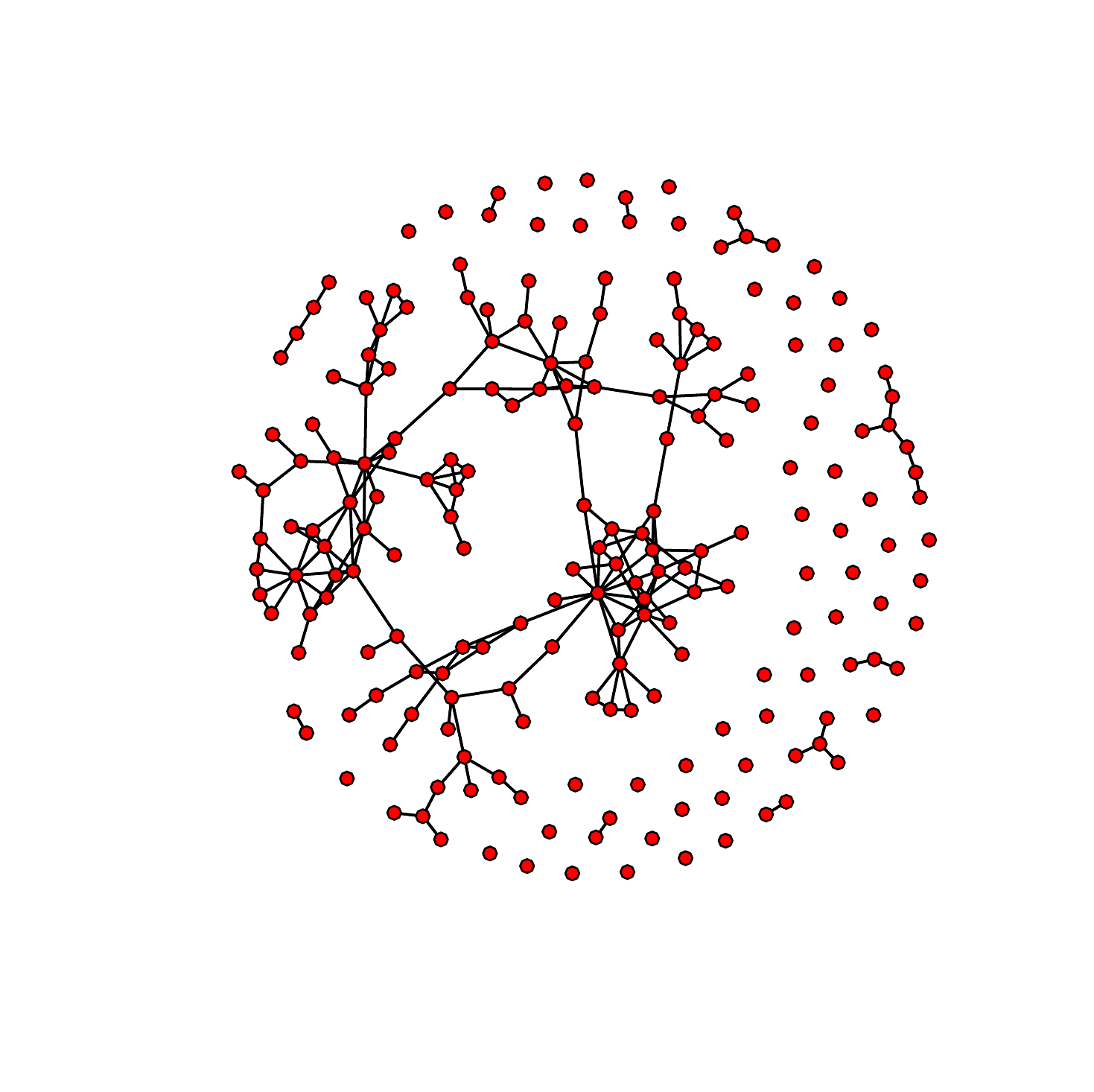}
\end{tabular}
\caption{\textit{left:} {\tt Faux-Mesa-High} $205$-node original network. \textit{right:} {\tt ELSRGM} generated network.}
\end{figure*}
\section{Conclusion and Future Work}\label{sec:conclusion}
In this paper, we investigated the cause of degeneracy in ERGMs, and explained the degeneracy as related to the feature vectors belonging to the relative interior of the convex polytope of feature values.  To correct this issue, we proposed an algorithm that uses spherical features as points on a sphere would belong to the relative boundary of the corresponding convex set rather than its relative interior. Based on this mapping, we outlined a novel model for graphs (\ELSRGM) and an approach to sampling graphs from this model.  In several synthetic and real-world social network, our approach generated graphs with statistics similar to that of the example graphs while not suffering from the issue of degeneracy (unlike ERGMs).

\ELSRGM\ opens up a new class of graph sampling models: those based on spherical features.  In this paper, we made several modeling choices, e.g., von Mises-Fisher distribution for spherical density, kernel approach assigning mass to individual feature vectors; other modeling choices could also lead to models preserving properties of the example graphs, and they need to be investigated.  The insight from the geometric interpretation of the feature vectors realizable as modes may also lead to other types of features (non-spherical) which can lead to models generating realistic graphs.

\section{Acknowledgements}\label{sec:acknowledgements}
The authors thank Okan Ersoy, S.V.N. Vishwanathan, and Richard C. Wilson for helpful
discussions and suggestions. This research was supported
by the NSF Award IIS-0916686. 

\bibliographystyle{abbrv}
\bibliography{elsrgmreport}

\end{document}